\documentclass[letterpaper,11pt]{article}
\usepackage{fullpage}
\usepackage{booktabs} 
\usepackage[ruled]{algorithm2e} 

\SetAlFnt{\small}
\SetAlCapFnt{\small}
\SetAlCapNameFnt{\small}
\SetAlCapHSkip{0pt}
\IncMargin{-\parindent}

\usepackage[numbers]{natbib}
\usepackage{amsmath,amsfonts,amsthm,amsbsy,amssymb,bm}
\usepackage[hidelinks,colorlinks,allcolors=magenta]{hyperref}
\usepackage{mathtools}
\usepackage[thinc]{esdiff}
\usepackage{enumitem}  
\usepackage{multirow}
\usepackage{nicefrac}
\usepackage{stackrel}
\usepackage{subcaption}
\usepackage{graphicx}
\usepackage{xcolor}
\usepackage{tikz}
\usepackage{tikz}
\usetikzlibrary{arrows.meta}
\usetikzlibrary{matrix,shapes,arrows,fit}

\usepackage[nameinlink]{cleveref}

\renewcommand{\paragraph}[1]{\medskip\noindent\textbf{#1}~}

  

\newcommand{\N}{\mathbb{N}}
\newcommand{\R}{\mathbb{R}}
\renewcommand{\ge}{\geqslant}

\renewcommand{\le}{\leqslant}

\DeclareMathOperator*{\NW}{NW}

\theoremstyle{plain}
\newtheorem{theorem}{Theorem}

\newtheorem{proposition}{Proposition}
\newtheorem{corollary}{Corollary}

\theoremstyle{definition}
\newtheorem{definition}{Definition}
\newtheorem{example}{Example}
\newtheorem{remark}{Remark}

\DeclareMathOperator*{\argmax}{\arg\max}

\newcommand{\dd}{\mathrm{d}}

\renewcommand{\O}{\ensuremath{\mathcal{O}}}
\newcommand{\U}{\ensuremath{\mathcal{U}}}
\renewcommand{\vec}{\bm}
\newcommand{\xx}{\vec{x}}
\newcommand{\yy}{\vec{y}}
\newcommand{\zz}{\vec{z}}
\newcommand{\uu}{\vec{u}}

\newcommand{\peffull}{individual harm ratio\xspace}
\newcommand{\pgeffull}{equal-sized group harm ratio\xspace}
\newcommand{\pgffull}{group harm ratio\xspace}
\newcommand{\pef}{IHR\xspace}
\newcommand{\pgef}{EGHR\xspace}
\newcommand{\pgf}{GHR\xspace}

\newcommand*\circled[1]{\tikz[baseline=(char.base)]{
            \node[shape=circle,draw,inner sep=2pt] (char) {$#1$};}}


\newcommand{\agents}{\ensuremath{N}}

\newcommand{\outcomes}{\ensuremath{\mathcal{C}}}

\sloppy
\begin{document}\allowdisplaybreaks
\title{Harm Ratio: A Novel and Versatile Fairness Criterion\thanks{A preliminary version was published in the proceedings of the 4th ACM Conference on Equity and Access in Algorithms, Mechanisms, and Optimization (EAAMO), 2024.}}

\author{
Soroush Ebadian\textsuperscript{\rm 1} 
\and 
Rupert Freeman\textsuperscript{\rm 2}
\and
Nisarg Shah\textsuperscript{\rm 1}
}
\date{
\textsuperscript{\rm 1}University of Toronto, \texttt{\{soroush,nisarg\}@cs.toronto.edu},\\
\textsuperscript{\rm 2}University of Virginia, \texttt{FreemanR@darden.virginia.edu}
}

\maketitle
\begin{abstract}
Envy-freeness has become the cornerstone of fair division research. In settings where each individual is allocated a disjoint share of collective resources, it is a compelling fairness axiom which demands that no individual strictly prefer the allocation of another individual to their own. Unfortunately, in many real-life collective decision-making problems, the goal is to choose a (common) public outcome that is equally applicable to all individuals, and the notion of envy becomes vacuous. Consequently, this literature has avoided studying fairness criteria that focus on individuals feeling a sense of jealousy or resentment towards other individuals (rather than towards the system), missing out on a key aspect of fairness. 

In this work, we propose a novel fairness criterion, \emph{\peffull}, which is inspired by envy-freeness but applies to a broad range of collective decision-making settings. Theoretically, we identify minimal conditions under which this criterion and its groupwise extensions can be guaranteed, and study the computational complexity of related problems. Empirically, we conduct experiments with real data to show that our fairness criterion is powerful enough to differentiate between prominent decision-making algorithms for a range of tasks from voting and fair division to participatory budgeting and peer review.  
\end{abstract}

\section{Introduction}

How to make collective decisions while treating (groups of) individuals in a fair manner is a question that human societies have struggled to understand for centuries. Today, with the advent of algorithms making increasingly critical decisions, the field of algorithmic fairness has considered a plethora of novel fairness criteria~\cite{pessach2023algorithmic}, albeit they often have limited applicability due to being handcrafted for specialized decision-making tasks. To surpass this limitation, researchers have looked towards computational social choice, a field at the intersection of economics and computer science~\cite{finocchiaro2021bridging,gummadi2019economic}, where fairness criteria applicable to a broad range of domains have been proposed. 

\emph{Envy-freeness}~\cite{Tinb30,GS58,Fol67} is arguably the most well-studied fairness criterion in computational social choice, defined for a general resource allocation setting in which a set of goods $M$ is to be divided between a set of agents $N = \{1,\ldots,n\}$. The goal is to find an allocation $A = (A_1,\ldots,A_n)$, which is a partition of $M$, with $A_i$ denoting the allocation to agent $i$. Each agent $i$ has a utility function $u_i : 2^M \to \R_{\ge 0}$ and her utility under allocation $A$ is given by $u_i(A_i)$. Envy-freeness demands that no agent strictly prefer the allocation to another agent over her own, i.e., $u_i(A_i) \ge u_i(A_j)$ for all $i,j \in N$. Envy-freeness is the canonical embodiment of two interesting features of a fairness criterion: \begin{enumerate}
    \item It is a pairwise individual fairness criterion, where an agent $i$, who feels that she is being treated unfairly, points to another agent $j$ as the reason for it.
    \item It makes no interpersonal comparisons, i.e., the utility function $u_i$ of agent $i$ is never compared to the utility function of any other agent.
\end{enumerate}
Significant literature has been devoted to exploring envy-freeness and its relaxations for a variety of resource allocation domains beyond the allocation of homogeneous divisible goods~\cite{Pro16,ABFV22}. 

A problem arises when one moves to more general collective decision-making domains. In the most general model, which we term the \emph{public outcomes model}, there is a set $N$ of $n$ agents and a set $\O$ of possible outcomes. Each agent $i$ has a utility function $u_i : \O \to \R_{\ge 0}$ and the goal is to choose an outcome $o \in \O$. \emph{What would envy-freeness mean here?} Even conceptually, can an agent $i$ really envy another agent $j$, despite both experiencing the common outcome $o$? From this vantage point, envy-freeness may seem vacuous for public outcomes. 

Due to this, the literature on (various special cases of) the public outcomes model has ignored envy-freeness, which misses an essential consideration where an individual feels they have been unfairly treated due to another individual receiving disproportionate importance.

Note that the public outcomes model captures the resource allocation model as a special case, by setting $\O$ as the set of all possible allocations $A$ and defining $u_i(A) \triangleq u_i(A_i)$. But importantly, it also captures a host of other collective decision-making problems such as choosing the winner of an election based on voters’ preferences, choosing a company's strategy based on its employees' opinions, deciding which graduate applicants to admit based on the preferences of faculty members, or picking interesting posts to showcase on a social media website's (non-personalized) front page based on its users' interests. 

\subsection{Our Contributions}\label{sec:contributions}

In \Cref{sec:new-def-relations}, we propose our novel pairwise individual fairness criterion, \emph{\peffull} (\pef), which is inspired by envy-freeness, makes no interpersonal comparisons, and applies to the full range of the public outcomes model. We observe that in resource allocation with additive utilities, 1-\pef logically implies (and is significantly stronger than) envy-freeness. We also define its groupwise extensions, \emph{\pgeffull} (\pgef) and \emph{\pgffull} (\pgf); analyze their relations to each other as well as to analogous extensions of envy-freeness in resource allocation~\cite{Var74,BTD92,CFSW19}; and reconstruct a hierarchy of fairness criteria from resource allocation in the broader public outcomes model (\Cref{fig:hierarchy}). 

In \Cref{sec:mnw}, we show that even a \pgffull (\pgf) of $1$ (which implies 1-\pef) can be guaranteed in the public outcomes model by maximizing the Nash welfare (geometric mean of agent utilities), when the set of feasible utility vectors $\U = \{(u_1(o),\ldots,u_n(o)) : o \in \O\}$ satisfies two mild conditions of compactness and upper convexity. In doing so, we establish an equivalence to proportional fairness and generalize several prior results; see \Cref{sec:significance_summary}. We also prove that any outcome with a \pgffull of $1$ achieves a (tight) $\binom{n}{\lfloor{n/2}\rfloor}^{\nicefrac{1}{n}}$-approximation --- at least as good as a $2$-approximation --- of the maximum Nash welfare in any instance of the public outcomes model. This complements an existing characterization of maximum Nash welfare via group fairness due to \citet{FSV20}.  

In \Cref{sec:computation}, we observe that in a special case of the public outcomes model, a $(1+\epsilon)$-\pgf outcome can be computed in time polynomial in the instance size and $1/\epsilon$ using convex programming, but leave polynomial-time computation of a 1-\pgf (or even 1-\pef) outcome as an open question. For the allocation of indivisible goods under additive utilities, given an allocation, while checking whether it is EF is trivially in P, we show that checking whether it is 1-\pef (which implies EF) is coNP-complete.

In \Cref{sec:experiments}, we conduct experiments with simulated and real datasets for the allocation of indivisible goods (\href{www.spliddit.org}{Spliddit}), peer review (CVPR 2017, CVPR 2018, ICLR 2018), and participatory budgeting (\href{www.pabulib.org}{Pabulib}). Our findings suggest that individual and group harm ratios are powerful enough to distinguish between prominent rules in terms of the level of fairness they achieve, and correlate well with existing domain-specific fairness notions in each domain. 

\subsection{Summary of Significance and Implications}\label{sec:significance_summary}

We argue the significance of our work in terms of providing a unifying framework to think of fairness in a broad range of settings and opening the door to novel styles of fairness analyses in well-studied domains. In \Cref{app:significance}, we provide a technical comparison of our results to those from prior works, and identify exactly which prior results our work subsumes. Here, we briefly summarize four significant implications of our work. 

\begin{enumerate}
    \item Our work subsumes existing results from the resource allocation literature proving that MNW allocations are envy-free (EF) and Pareto optimal (PO), e.g., in cake-cutting~\cite{Wel85,SS19} (with our proof being more elementary than the existing ones), or that they are approximately EF and PO, e.g., in one-sided matching~\cite{troebst2024cardinal}. This is a result of $1$-\pef implying exact or approximate EF in these domains. 
    \item Our work adds to a long line of literature identifying conditions guaranteeing the existence of EF+PO allocations~\cite{Var74,Sven83,diamantaras1992equity} and outcomes satisfying another fairness criterion, proportional fairness~\cite{ray2022fairness}.
    \item In some domains where our results hold, the fact that MNW allocations satisfy $1$-\pef and PO is novel and interesting~\cite{Dall01,cole2021existence,caragiannis2022beyond}; in one case, we are able to establish (a previously unknown) existence of EF+PO allocations~\cite{Dall01}.
    \item Finally, our work significantly expands the possibility of conducting pairwise individual fairness analyses to a much larger set of collective decision-making domains than the resource allocation domains where envy-freeness is well-defined.
\end{enumerate}

\section{Model}
For $t \in \N$, define $[t] \triangleq \{1,2,\ldots,t\}$. We consider a very general multi-agent decision-making setting, which we term the \emph{public outcomes} model, where each instance is given by a set of outcomes $\O$, a finite set of agents $N=[n]$, and a utility profile $\uu = (u_1,\ldots,u_n)$ containing a utility function $u_i : \O \to \R_{\ge 0}$ for each agent $i \in N$. Without loss of generality, we assume that for each agent $i \in N$ there is some outcome $o \in \O$ such that $u_i(o) > 0$; otherwise, agent $i$ can effectively be removed from consideration. 

A priori, we impose no assumption on the set of outcomes $\O$ (it could be finite, countably infinite, or uncountable) or the utility functions of the agents. As such, this general model subsumes various models studied in the literature as special cases, including cake-cutting~\cite{Stein48}, allocation of homogeneous divisible or indivisible goods~\cite{Moul04}, public decision-making~\cite{CFS17}, and allocation of public goods~\cite{FMS18}.

\textbf{Utility set.} Define the \emph{utility set} $\U = \{(u_1(o),\ldots,u_n(o)) : o \in \O\}$ to be the set of feasible utility vectors. Because we have placed no assumptions on the agent utilities, this is essentially the only object of interest as, given any $\U$, one can easily construct underlying outcome set $\O$ and utility profile $\uu$ which induce $\U$. Crucially, the fairness notions we define below also depend only on $\U$.

\subsection{Fairness}\label{sec:prelim-fairness}
While many fairness notions defined in the literature for specialized models do not extend to this general model, the following ones do, and play a key role in our work. 
First, an outcome is proportionally fair if the agents are not happier (in an average multiplicative sense) when switching to any other outcome.

\begin{definition}[Proportional Fairness (PF)]\label{def:pf}
    We say that an outcome $o \in \O$ is proportionally fair (PF) if $\frac{1}{n} \sum_{i \in N} \frac{u_i(o')}{u_i(o)} \le 1$ for all $o' \in \O$.\footnote{If $u_i(o) = 0$, we let $u_i(o')/u_i(o) = \infty$ and PF is automatically violated.} 
\end{definition}

A maximum Nash welfare outcome is one that maximizes the geometric mean of agent utilities.

\begin{definition}[Maximum Nash Welfare (MNW)]\label{def:mnw}
Define the Nash welfare of an outcome $o \in \O$ as the geometric mean of agent utilities for it: $\NW(o) = (\prod_{i \in N} u_i(o))^{\nicefrac{1}{n}}$.\footnote{If $u_i(o)=0$ for any agent $i \in N$, we say that $\NW(o) = 0$.} For $\alpha \ge 1$, 
we say that an outcome $o \in \O$ is an $\alpha$-approximate maximum Nash welfare (MNW) outcome if $\NW(o) \ge (1/\alpha) \cdot \NW(o')$ for all $o' \in \O$. When $\alpha = 1$, we simply refer to it as an MNW outcome. 
\end{definition}

An outcome is in the core~\cite{Fol67} if no group can find an alternative outcome that provides all of them weakly higher utility, and at least one of them strictly higher utility, even after scaling their utilities by the size of the group in proportion to the set of all agents.

\begin{definition}[Core]\label{def:core}
    We say that an outcome $o \in \O$ is in the core if there is no group of agents $S \subseteq N$ and outcome $o' \in \O$ such that $\frac{|S|}{n} \cdot u_i(o') \ge u_i(o)$ for all $i \in S$ and at least one inequality is strict. 
\end{definition}

Proportionality~\cite{Stein48} says that each agent should receive at least a $\nicefrac{1}{n}$ fraction of the utility she can achieve in any outcome. 

\begin{definition}[Proportionality (Prop)]\label{def:prop}
    We say that an outcome $o \in \O$ is proportional (Prop) if $u_i(o) \ge \frac{1}{n} u_i(o')$ for all $o' \in \O$ and all $i \in N$. 
\end{definition}

Finally, the following is traditionally considered a notion of efficiency, but we include it here because it connects well to the notions defined above. An outcome is Pareto optimal if there is no alternative outcome that makes each agent at least as happy and some agent strictly happier. 

\begin{definition}[Pareto Optimality (PO)]\label{def:po}
    We say that an outcome $o \in \O$ is Pareto optimal (PO) if there is no outcome $o' \in \O$ such that $u_i(o') \ge u_i(o)$ for all $i \in N$ and at least one inequality is strict. 
\end{definition}

Note that these notions are \emph{utilitarian}, i.e., they depend only on the utility vector induced by the outcome. Hence, one can equivalently speak of a utility vector satisfying these notions; if a utility vector satisfies a utilitarian notion, every outcome inducing it must satisfy the notion as well. 

The following proposition describes well-known relationships between these notions.

\begin{proposition}[PF $\Rightarrow$ MNW, PF $\Rightarrow$ Core $\Rightarrow$ (Prop+PO)]\label{prop:implications-prelim}
    Any proportionally fair outcome is also a maximum Nash welfare outcome and lies in the core. Any outcome in the core satisfies proportionality and Pareto optimality.
\end{proposition}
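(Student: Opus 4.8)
The plan is to establish the four implications separately, each of which follows directly by unwinding the relevant definitions; there is no single global argument, so I would treat them as four short claims chained together: PF $\Rightarrow$ MNW, PF $\Rightarrow$ Core, Core $\Rightarrow$ Prop, and Core $\Rightarrow$ PO.

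First, for PF $\Rightarrow$ MNW: a proportionally fair outcome $o$ must have $u_i(o) > 0$ for every $i$ (otherwise the defining inequality is violated by the footnote convention), so I can apply the AM--GM inequality to the ratios $u_i(o')/u_i(o)$. For any competing outcome $o'$, AM--GM gives $\left(\prod_{i \in N} \frac{u_i(o')}{u_i(o)}\right)^{\nicefrac{1}{n}} \le \frac{1}{n}\sum_{i \in N} \frac{u_i(o')}{u_i(o)} \le 1$, where the final step is exactly PF. Rearranging yields $\prod_{i \in N} u_i(o') \le \prod_{i \in N} u_i(o)$, i.e.\ $\NW(o') \le \NW(o)$ for all $o'$, which is MNW.

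Next, for PF $\Rightarrow$ Core, I would argue by contrapositive. Suppose $o$ is blocked by a coalition $S$ via $o'$, so $\frac{|S|}{n} u_i(o') \ge u_i(o)$ for all $i \in S$ with at least one inequality strict. Dividing through (again using $u_i(o) > 0$) gives $\frac{u_i(o')}{u_i(o)} \ge \frac{n}{|S|}$ for each $i \in S$, strictly for at least one. Summing over $S$ and discarding the nonnegative terms outside $S$ yields $\sum_{i \in N} \frac{u_i(o')}{u_i(o)} > |S| \cdot \frac{n}{|S|} = n$, contradicting PF. The remaining two implications specialize the core condition to particular coalitions: taking the singleton $S = \{i\}$ turns the blocking condition into $\frac{1}{n} u_i(o') > u_i(o)$, so its absence for all $o'$ is precisely proportionality; taking the grand coalition $S = N$ makes $\frac{|S|}{n} = 1$, so the blocking condition becomes $u_i(o') \ge u_i(o)$ for all $i$ with one strict, whose absence is exactly Pareto optimality.

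I do not expect a genuine obstacle here; the only points requiring care are bookkeeping ones. I must confirm that PF forces strictly positive utilities before dividing by $u_i(o)$, and I must track the strict inequality carefully in the PF $\Rightarrow$ Core step so that the final sum is \emph{strictly} greater than $n$. For the singleton case of Core $\Rightarrow$ Prop I should note that ``at least one inequality is strict'' reduces to the single inequality being strict, so that negating the blocking condition reproduces proportionality's strict-free statement.
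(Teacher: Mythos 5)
Your proposal is correct and follows essentially the same route as the paper's own proof: AM--GM for PF $\Rightarrow$ MNW, a contrapositive summing argument (with strictness tracked) for PF $\Rightarrow$ Core, and specializing the core condition to $S=\{i\}$ and $S=N$ for proportionality and Pareto optimality respectively. The bookkeeping points you flag (positivity of $u_i(o)$ under PF and the singleton reading of ``at least one inequality is strict'') are exactly the ones the paper's proof relies on.
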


\begin{proof}
\Cref{prop:implications-prelim}]
    \emph{PF $\Rightarrow$ MNW.} Suppose $o \in \O$ is PF. Then, $u_i(o) > 0$ for all $i \in N$. For any $o' \in \O$, we have
    \[
    \left(\prod_{i \in N} \frac{u_i(o')}{u_i(o)}\right)^{{1/n}} \le \frac{1}{n} \sum_{i \in N} \frac{u_i(o')}{u_i(o)} \le 1,
    \]
    where we use the AM-GM inequality followed by the definition of PF. This shows that $\NW(o') \le \NW(o)$, as desired.

    \emph{PF $\Rightarrow$ Core.} Suppose $o \in \O$ is PF but violates the core. Then, there exists a group of agents $S \subseteq N$ and an outcome $o' \in \O$ such that $\frac{u_i(o')}{u_i(o)} \ge \frac{n}{|S|}$ for all $i \in S$ and at least one inequality is strict (we can place $u_i(o)$ in the denominator because $o$ being PF ensures $u_i(o) > 0$). Thus, we have
    \[
    \frac{1}{n} \cdot \sum_{i \in N} \frac{u_i(o')}{u_i(o)} \ge \frac{1}{n} \cdot \sum_{i \in S} \frac{u_i(o')}{u_i(o)} > \frac{1}{n} \cdot |S| \cdot \frac{n}{|S|} = 1,
    \]
    which violates $o$ being PF. 

    \emph{Core $\Rightarrow$ (Prop+PO).} Core $\Rightarrow$ Prop follows from observing that proportionality imposes the same constraint as the core, but only for groups of agents $S$ with $|S|=1$. Similarly, Pareto optimality imposes the same constraint as the core but only for $S = N$. 
\end{proof}

\begin{remark}
In many domains, every agent can possibly get zero utility under some outcome (i.e., $\forall i, \exists o \in \O : u_i(o) = 0$). However, when this is not the case, one can define stronger versions of the core and proportionality that take $u_i^{\min} \triangleq \inf_{o \in \O} u_i(o)$ into account. 

In the definition of the core, we would write $\frac{|S|}{n} (u_i(o')-u_i^{\min}) \ge (u_i(o)-u_i^{\min})$, and for proportionality, we would write $u_i(o) \ge \frac{1}{n} u_i(o') + \frac{n-1}{n} u_i^{\min} \Leftrightarrow u_i(o)-u_i^{\min} \ge \frac{1}{n} (u_i(o')-u_i^{\min})$. One can check that this strengthens the respective definition. \citet{ASSW23} refer to the strengthened proportionality as general fair share (GFS).

Any method of achieving the regular versions can be modified to achieve the strengthened versions by feeding it translated utility functions $\{u'_i\}_{i \in N}$, where $u'_i(o) = u_i(o)-u_i^{\min}$ for all $i \in N$ and $o \in \O$. For maximum Nash welfare, we would then seek an outcome $o$ that maximizes $\prod_{i \in N} (u_i(o)-u_i^{\min})$; this is akin to the Nash bargaining solution~\cite{Nash50b}, if $u_1^{\min},\ldots,u_n^{\min}$ are treated as status quo utilities. Similarly, for proportional fairness, we would seek an outcome $o$ for which $\frac{1}{n} \sum_{i \in N} \frac{u_i(o')-u_i^{\min}}{u_i(o)-u_i^{\min}} \le 1$. 
\end{remark}

\subsection{Private Goods Division}\label{sec:prelim-private}
As demonstrated in the introduction, the public outcomes model subsumes a wide range of collective decision-making models studied in the literature. Below, we formally introduce the special case of private goods division, which we will refer to frequently in our results. 

In a general \emph{private goods division} setting, there is a set of goods $M$ to be divided between a set of agents $N = [n]$. The outcome set $\O$ is the set of allocations $A = (A_1,\ldots,A_n)$ which are partitions of $M$ into pairwise-disjoint measurable bundles. Each agent $i \in N$ has a measure $u_i$ over $M$.\footnote{A measure assigns a non-negative value to each measurable subset of $M$ ($0$ to the empty subset) and is countably additive.} Crucially, her utility for an allocation $A$ is solely a function of the bundle $A_i$ assigned to her, namely $u_i(A_i)$. 

This model captures three prominent settings studied in the literature:
\begin{enumerate}
    \item \emph{Cake-cutting:} $M = [0,1]$ and the utility function $u_i$ of each agent $i$ is a (countably additive) measure over $M$ that is absolutely continuous with respect to the Lebesgue measure.
    \item \emph{Homogeneous divisible goods:} $M$ is a set of $m$ homogeneous divisible goods. An allocation can be described as $A = (A_{i,g})_{i \in N, g \in M}$ with $A_{i,g}$ being the fraction of good $g$ allocated to agent $i$ and $\sum_{i \in N} A_{i,g} = 1$ for all $g$. The utility of each agent $i$ can be given by $u_i(A_i) = \sum_{g \in M} A_{i,g} \cdot v_{i,g}$, where $v_{i,g}$ is her value for receiving good $g$ entirely. 
    \item \emph{Indivisible goods:} This is identical to homogeneous divisible goods, except we further restrict $A_{i,g} \in \{0,1\}$ for all $i \in N$ and $g \in M$. 
\end{enumerate}

For private goods division, additional fairness notions have been extensively studied, but they do not extend to the more general model of public outcomes; coming up with natural extensions of these notions is precisely the subject of our work. 
Envy-freeness~\cite{Fol67} says that no agent should prefer the allocation of another agent to her own. 

\begin{definition}[Envy-Freeness (EF)]\label{def:ef}
    We say that an allocation $A$ is envy-free (EF) if $u_i(A_i) \ge u_i(A_j)$ for all $i,j \in N$.
\end{definition}

Group fairness~\cite{CFSW19} requires that no group of agents should ``envy'' (in the sense of being able to Pareto improve by taking the resources allocated to the other group and redistributing amongst themselves, subject to an appropriate scaling factor to account for different-sized groups) any other group of agents.

\begin{definition}[Group Fairness (GF)]\label{def:gf}
    We say that an allocation $A$ is group fair (GF) if there is no pair of groups of agents $S,T \subseteq N$ and a division $B$ of $\cup_{j \in T} A_j$ among agents in $S$ such that $\frac{|S|}{|T|} \cdot u_i(B_i) \ge u_i(A_i)$ for all $i \in S$ and at least one inequality is strict.
\end{definition}

For private goods division, envy-freeness implies proportionality, and group fairness implies the core (and therefore also proportionality). Additionally, restricting the group fairness definition to allow only pairs of groups with $|S|=|T|$ yields the group envy-freeness definition of~\citet{BTD92}.

Note that the difficulty in extending these notions is that in the public outcomes model, there is nothing ``allocated'' to individual agents. Instead, a common outcome is selected for all agents, making the concept of envy between individuals or groups vacuous. 

\section{Harm Ratio: A Novel Fairness Criterion}\label{sec:new-def-relations}

To motivate the definition of \peffull, let us first revisit envy-freeness (EF) in private goods division with additive utilities (Definition~\ref{def:ef}): an allocation $A$ is EF if $u_i(A_i) \ge u_i(A_j)$ for all $i,j \in \agents$. This is not a utilitarian notion: envy-freeness cannot be checked simply from the induced utility vector $(u_1(A),\ldots,u_n(A))$, which makes it difficult to extend to the public outcomes model. However, if $A$ violates EF, there exists a new allocation $A'$, given by $A'_i = A_i \cup A_j$, $A'_j = \emptyset$, and $A'_k = A_k$ for all $k \in N\setminus\{i,j\}$, which induces a utility vector $(u_1(A'),\ldots,u_n(A'))$ satisfying the following: $u_i(A') > 2 \cdot u_i(A)$ and $u_k(A') \ge u_k(A)$ for all $k \in N\setminus\{i,j\}$. This is a purely utilitarian comparison, which can be generalized well beyond private goods division to the public outcomes model. As argued in the introduction, the utility improvement factor of $2$ in this case measures the level of harm imposed on agent $i$ due to the presence of agent $j$. We refer to it as the \peffull (\pef).

The factor of $2$ is important. Not only does it show up in connection to private goods division with additive utilities, we prove in Theorem~\ref{thm:mnw-pf} that it remains the tightest achievable factor for the more general public outcomes model under certain conditions on the utility set $\U$. It also generalizes well to a factor of $\frac{|S \cup T|}{|S|}$ in the groupwise extension of \pef introduced later (Definition~\ref{def:pgf}). However, a worse factor may be achieved when the utilities do not satisfy the requirements of Theorem~\ref{thm:mnw-pf}, or a better factor may be achieved in practice on real-world instances. To account for these possibilities, we introduce an $\alpha$-approximation of \pef.

\begin{definition}[Individual Harm Ratio]\label{def:pub-ef}
    We say that an outcome $o \in \O$ has an \emph{\peffull} of $\alpha$, denoted $\alpha$-\pef, if there are no agents $i,j \in N$ and outcome $o' \in \O$ such that $\frac{1}{2} \cdot u_i(o') > \alpha \cdot u_i(o)$ and $u_k(o') \ge u_k(o)$ for all $k \in N \setminus \{i,j\}$.
\end{definition}

Definition~\ref{def:pub-ef} requires that no agent be able to find an outcome in which even half of her utility is higher (by a factor of $\alpha$) and at most one other agent is hurt. Intuitively, a large \peffull gives agent $i$ a ``justified claim'' that the system is treating her unfairly due to its attempt to make agent $j$ happy. The higher the $\alpha$, the stronger the claim. Conversely, the lower the value of $\alpha$ for an outcome, the stronger its fairness guarantee.

\subsection{Comparison to Envy-Freeness and Proportionality}\label{sec:pef-private}

Consider private goods division with additive utilities. From the argument above, it is clear that $1$-\pef implies envy-freeness, but in fact, it is strictly stronger: even if moving the goods allocated to agent $j$ to agent $i$ does not suffice to more than double agent $i$'s utility, a complete reshuffling that also alters the allocations to the other agents may nonetheless be able to achieve this, while still keeping the other agents at least as happy. The next result shows that envy-freeness is much weaker than $1$-\pef, as it implies only $(\nicefrac{n}{2})$-\pef.

\begin{figure}[t]
    \centering
    \begin{tikzpicture}
      \matrix (M) [%
        matrix of nodes, column sep=3pt, row sep=3pt
      ]
      {%
        { }& $g_1$ & $g_2$ & $g_3$ & $\ldots$ & $g_{n - 1}$ & $g_n$\\
         $u_1$ & $1$ & $0$ & $0$ & $\ldots$ & $0$ & $0$ \\
         $u_2$ & $1$ & $1$ & $0$ & $\ldots$ & $0$ & $0$ \\
         $u_3$ & $1$ & $1$ & $1$ & $\ldots$ & $0$ & $0$ \\
         $\vdots$ &&&& $\ddots$ && $\vdots$ \\
         $u_{n-1}$ & $1$ & $1$ & $1$ & $\ldots$ & $1$ & $0$ \\
         $u_n$ & $1$ & $1$ & $1$ & $\ldots$ & $1$ & $1$ \\
       };
       \node[draw=red,rounded corners = 1ex,fit=(M-2-2),inner sep = 0pt] {};
       \node[draw=red,rounded corners = 1ex,fit=(M-4-3)(M-7-3),inner sep = 1pt] {};
       \node[draw=blue,rounded corners = 1ex,fit=(M-2-2)(M-7-2),inner sep = 2pt] {};
       \node[draw=blue,rounded corners = 1ex,fit=(M-3-3),inner sep = 1pt] {};
       \node[draw=blue,rounded corners = 1ex,fit=(M-4-4),inner sep = 2pt] {};
       \node[draw=red,rounded corners = 1ex,fit=(M-4-4),inner sep = 0pt] {};
       \node[draw=blue,rounded corners = 1ex,fit=(M-6-6),inner sep = 2pt] {};
       \node[draw=red,rounded corners = 1ex,fit=(M-6-6),inner sep = 0pt] {};
       \node[draw=blue,rounded corners = 1ex,fit=(M-7-7),inner sep = 2pt] {};
       \node[draw=red,rounded corners = 1ex,fit=(M-7-7),inner sep = 0pt] {};
    \end{tikzpicture}
    \caption{Instance for \Cref{thm:pef-ef}. Blue and red rectangles show two different allocations; the rectangle in each good's column covers the agents among which the good is equally divided. The blue allocation is EF, but $(\nicefrac{n}{2})$-\pef, as witnessed by the existence of the red allocation.}
    \label{tab:ef-vs-publicef}
\end{figure}

\begin{theorem}\label{thm:pef-ef}
For private goods division with additive utilities:
\begin{itemize}
    \item {\normalfont ($1$-\pef $\Rightarrow$ EF)} An \peffull of $1$ implies envy-freeness.
    \item {\normalfont (Prop $\Rightarrow$ $(\nicefrac{n}{2})$-\pef)} Proportionality (and therefore envy-freeness) implies an \peffull of $\nicefrac{n}{2}$, and this is tight even for the allocation of homogeneous divisible goods.
\end{itemize}
\end{theorem}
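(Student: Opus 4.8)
I would argue by contraposition, formalizing the motivating discussion preceding \Cref{def:pub-ef}. Suppose an allocation $A$ is not EF, so some agent $i$ strictly envies some agent $j$, i.e.\ $u_i(A_j) > u_i(A_i)$. Consider the reallocation $A'$ that dumps $j$'s bundle onto $i$: set $A'_i = A_i \cup A_j$, $A'_j = \emptyset$, and $A'_k = A_k$ for all other $k$. By additivity of $u_i$ over the disjoint bundles $A_i$ and $A_j$,
\[
u_i(A') = u_i(A_i) + u_i(A_j) > 2\,u_i(A_i),
\]
so $\tfrac12 u_i(A') > u_i(A) = 1\cdot u_i(A)$, while every agent $k \notin \{i,j\}$ has $u_k(A') = u_k(A)$. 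This is exactly a violation of $1$-\pef witnessed by $(i,j,A')$, contradicting the hypothesis.

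\textbf{Part 2, the $\nicefrac n2$ bound (Prop $\Rightarrow (\nicefrac n2)$-\pef).} This direction I would prove directly, and notably it does not even use the ``harm'' clause of the definition. Suppose $A$ is proportional. Then for every agent $i$ and every alternative allocation $A'$ we have $u_i(A) \ge \tfrac1n u_i(A')$, i.e.\ $u_i(A') \le n\,u_i(A)$. Hence no triple $(i,j,A')$ can satisfy $\tfrac12 u_i(A') > \tfrac n2 u_i(A)$ (which rearranges to $u_i(A') > n\,u_i(A)$), so $A$ is $(\nicefrac n2)$-\pef. Since envy-freeness implies proportionality in private goods division (stated earlier in the excerpt), the same bound follows under EF.

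\textbf{Part 2, tightness.} For the matching lower bound I would use the homogeneous-divisible-goods instance of \Cref{tab:ef-vs-publicef}: $n$ agents and $n$ goods with the staircase valuations $u_i(g_k) = 1$ if $k \le i$ and $0$ otherwise. Consider the (blue) allocation $A$ that splits $g_1$ equally among all $n$ agents and gives $g_k$ entirely to agent $k$ for each $k \ge 2$. Then $u_1(A) = \tfrac1n$ and $u_k(A) = 1 + \tfrac1n$ for $k \ge 2$; since agent $i$'s maximum achievable utility is $i$ (she values exactly $g_1,\dots,g_i$), one checks $u_i(A) \ge \tfrac in$, so $A$ is proportional (indeed EF). To show its harm ratio is no better than $\nicefrac n2$, I would exhibit the witness $(i,j) = (1,2)$ with the (red) allocation $A'$ that gives all of $g_1$ to agent $1$, splits $g_2$ equally among agents $3,\dots,n$, and gives $g_k$ to agent $k$ for $k \ge 3$. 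Then $u_1(A') = 1$, and for each $k \ge 3$ we have $u_k(A') = 1 + \tfrac1{n-2} \ge 1 + \tfrac1n = u_k(A)$, so every agent outside $\{1,2\}$ is weakly better off. The ratio is $\frac{u_1(A')}{2\,u_1(A)} = \frac{1}{2/n} = \frac n2$, so $A$ fails to be $\alpha$-\pef for any $\alpha < \nicefrac n2$, matching the upper bound exactly.

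\textbf{Expected main obstacle.} Parts~1 and the upper bound of Part~2 are short once one writes down the right reallocation / invokes proportionality. The only genuinely delicate piece is the tightness construction, where I must \emph{simultaneously} boost agent $1$'s utility by the full factor of $n$ and redistribute the good freed by harming agent $2$ so that every remaining agent stays weakly happy. The staircase valuations are exactly what make this work: emptying agent $2$'s bundle releases $g_2$, which every higher-indexed agent values, permitting uniform compensation via the slightly larger share $\tfrac1{n-2} \ge \tfrac1n$. I would take care to confirm that this compensation step, not the improvement for agent $1$, is the binding feasibility requirement.
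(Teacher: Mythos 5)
Your proposal is correct and follows essentially the same route as the paper's proof: the contrapositive ``dump $j$'s bundle onto $i$'' argument for Part~1, the observation that proportionality caps any agent's utility in any allocation at $n$ times her current one for the upper bound (the paper phrases this via $u_i(A'_i) \le u_i(M) \le n\,u_i(A_i)$, which is the same computation), and the identical staircase instance with the same blue/red allocations for tightness. The only cosmetic omission is the trivial $n \ge 3$ caveat for the tightness construction (the paper notes that for $n=2$, $1$-\pef and EF coincide, making $\nicefrac{n}{2}=1$ tight there automatically), which your $\tfrac{1}{n-2}$ shares implicitly require.
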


\begin{proof}
Fix any private goods division instance with additive utilities. We have already argued that $1$-\pef implies EF. To see that Prop implies $(\nicefrac{n}{2})$-\pef, consider any Prop allocation $A$. Hence, $u_i(A_i) \ge \frac{1}{n} \cdot u_i(M)$. Suppose for contradiction that $A$ is not $(\nicefrac{n}{2})$-\pef. Hence, there exist agents $i,j \in N$ and allocation $A'$ such that $u_i(A'_i) > n \cdot u_i(A_i) \ge u_i(M)$, which is a contradiction. 

For $n=2$, this shows that $1$-\pef is equivalent to EF, which is also evident from the definitions of \pef and EF. To show tightness for $n \ge 3$, consider the instance in \Cref{tab:ef-vs-publicef} where $n$ homogeneous divisible goods need to be divided between $n$ agents. Agent $i \in [n]$ values items $j \in \{1, \ldots, i\}$ at $1$ and the rest at $0$. 

Consider an allocation $A$ (shown via blue rectangles in \Cref{tab:ef-vs-publicef}) that divides $g_1$ equally among all agents and allocates good $j \in [n] \setminus \{1\}$ to agent $j$. Then, $u_1(A_1) = \frac{1}{n}$ and $u_i(A_i) = 1 + \frac{1}{n}$ for all $i\in[n] \setminus \{1\}$. It is easy to check that this is EF. However, we show that agent $1$ is significantly harmed by the presence of agent $2$ in the sense of \pef. Consider the allocation $A'$ (shown via red rectangles in \Cref{tab:ef-vs-publicef}) where each agent $i \in [n] \setminus \{2\}$ is allocated the $i$-th good, and the second good is equally divided among agents in $\{3,\ldots,n\}$. Then, $u_1(A'_1) = 1$ and $u_i(A'_i) = 1 + \frac{1}{n - 2} \ge 1 + \frac{1}{n} = u_i(A_i)$ for all $i \in [3, n]$. Each agent $i \in N \setminus \{1,2\}$ is at least as happy under $A'$ as under $A$, and for agent $1$, $u_1(A'_1) = 1 = \frac{n}{2} \cdot 2 \cdot u_1(A_1)$. Hence, $A$ is only $(\nicefrac{n}{2})$-\pef.
\end{proof}

What happens once we go beyond private goods division? As noted earlier, envy-freeness is no longer well-defined for the public outcomes model, but proportionality still is. For private goods division with additive utilities, we noticed above that $1$-\pef implies proportionality. This breaks down when we move to the public outcomes model, as the following simple example shows. 

\begin{example}\label{ex:pef-weak}
    Consider an instance of the public outcomes model with a set of two outcomes $\O = \{o,o'\}$ and a set of three agents $N = [3]$. The utilities are $(u_1(o),u_2(o),u_3(o)) = (1,1,0)$ and $(u_1(o'),u_2(o'),u_3(o')) = (0.1,0.1,1)$. Outcome $o$ is $1$-\pef: agents $1$ and $2$ are maximally happy, and while agent $3$ receives zero utility, she cannot choose any other outcome where only one agent is hurt and thus cannot improve at all in any outcome that hurts at most one other agent. Note that $o$ gives a zero approximation of proportionality for agent $3$. 
\end{example}

\subsection{Groupwise Extension}

Example~\ref{ex:pef-weak} points out a weakness of \pef in the public outcomes model. Finding an outcome that hurts at most one other agent can be quite limiting if there are no such alternative outcomes. What if we allow an agent (or a group of agents) to find outcomes that hurt multiple other agents? This must be accompanied with a greater requirement of utility improvement. It turns out we can strengthen \pef to \pgffull (\pgf) in the same way that envy-freeness is strengthened to group fairness (GF) in private goods division~\cite{CFSW19}. 

\begin{definition}[Group Harm Ratio]\label{def:pgf}
We say that an
outcome $o \in \O$ has a \emph{\pgffull} of $\alpha$, denoted $\alpha$-\pgf, if there are no non-empty groups of agents $S,T \subseteq N$ and outcome $o' \in \O$ such that the following two conditions hold:
\begin{enumerate}
    \item For all agents $i \in S$, $\frac{|S|}{|S \cup T|} \cdot u_i(o') \ge \alpha \cdot u_i(o)$ with at least one inequality being strict.
    \item For all agents $i \in N\setminus (S \cup T)$, $u_i(o') \ge u_i(o)$.
\end{enumerate}
\end{definition}

We remark that one can also define \pgeffull (\pgef) by only imposing the \pgf constraints when $|S|=|T|$, the same way GEF is defined for private goods division. 

\begin{proposition}\label{prop:implications-public}
In the public outcomes model, PF $\Rightarrow$ $1$-\pgf $\Rightarrow$ (PO + Core + $1$-\pef).
\end{proposition}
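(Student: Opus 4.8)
The plan is to split the chain into its constituent implications: first PF $\Rightarrow$ $1$-\pgf, and then the three consequences of $1$-\pgf (PO, Core, and $1$-\pef) separately. Each implication out of $1$-\pgf will follow by taking a purported violation of the weaker notion and reading it off as a violation of $1$-\pgf under a suitable choice of the groups $S,T$ in Definition~\ref{def:pgf}; the only substantive step is the first implication, PF $\Rightarrow$ $1$-\pgf.

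For PF $\Rightarrow$ $1$-\pgf, let $o$ be PF. As in the proof of Proposition~\ref{prop:implications-prelim}, PF forces $u_i(o) > 0$ for all $i$, so every ratio below is well-defined. Suppose for contradiction that $o$ violates $1$-\pgf, witnessed by non-empty groups $S,T \subseteq N$ and an outcome $o'$. I would lower-bound $\frac{1}{n}\sum_{i \in N} \frac{u_i(o')}{u_i(o)}$ by partitioning $N$ into the three blocks $S$, $T \setminus S$, and $N \setminus (S \cup T)$. Condition~1 of Definition~\ref{def:pgf} (with $\alpha = 1$) gives $\frac{u_i(o')}{u_i(o)} \ge \frac{|S \cup T|}{|S|}$ for each $i \in S$, strictly for at least one such $i$; condition~2 gives $\frac{u_i(o')}{u_i(o)} \ge 1$ for $i \in N \setminus (S \cup T)$; and non-negativity of utilities gives $\frac{u_i(o')}{u_i(o)} \ge 0$ on the leftover block $T \setminus S$, which can therefore simply be dropped. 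Summing, the $S$-block contributes at least $|S| \cdot \frac{|S\cup T|}{|S|} = |S \cup T|$ and the outside block contributes at least $n - |S \cup T|$, for a total of at least $n$; the single strict inequality makes it strictly larger, so $\frac{1}{n}\sum_{i \in N} \frac{u_i(o')}{u_i(o)} > 1$, contradicting PF.

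For the consequences of $1$-\pgf, I argue each contrapositive by exhibiting a violating triple $(S,T,o')$. For Core, a core-blocking pair $(S,o')$ satisfies $\frac{|S|}{n}u_i(o') \ge u_i(o)$ for all $i \in S$ with one strict; taking $T = N$ makes $S \cup T = N$, so $\frac{|S|}{|S\cup T|} = \frac{|S|}{n}$ matches the core's scaling exactly, condition~1 is precisely the core-blocking condition, and condition~2 is vacuous. PO then follows from Core by Proposition~\ref{prop:implications-prelim} (equivalently, directly from the instantiation $S = T = N$, for which $\frac{|S|}{|S\cup T|}=1$ and condition~1 is exactly Pareto domination). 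For $1$-\pef, a violation for agents $i,j$ and outcome $o'$ gives $\frac{1}{2} u_i(o') > u_i(o)$ and $u_k(o') \ge u_k(o)$ for $k \in N\setminus\{i,j\}$; taking $S = \{i\}$, $T = \{j\}$ yields $\frac{|S|}{|S\cup T|} = \frac{1}{2}$, so condition~1 becomes $\frac{1}{2} u_i(o') \ge u_i(o)$ (which holds strictly) and condition~2 is exactly the hypothesis on the remaining agents.

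I expect the only genuinely delicate step to be PF $\Rightarrow$ $1$-\pgf, specifically the bookkeeping across the three agent blocks and the check that the single strict inequality in condition~1 propagates into a strict violation of PF; the remaining implications are mechanical specializations of the group sizes. One edge case to confirm is whether Definition~\ref{def:pub-ef} permits $i = j$, in which case I would instead take $S = T = \{i\}$ and use $u_i(o') > 2u_i(o) \ge u_i(o)$ to witness the $1$-\pgf violation.
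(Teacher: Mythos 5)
Your proof is correct and follows essentially the same route as the paper's: you establish PF $\Rightarrow$ $1$-\pgf by lower-bounding the average ratio $\frac{1}{n}\sum_{i \in N} u_i(o')/u_i(o)$ over the blocks $S$ and $N \setminus (S \cup T)$ (dropping $T \setminus S$ by non-negativity, with the single strict inequality yielding a strict violation), and you obtain PO, Core, and $1$-\pef exactly as the paper does, by specializing $(S,T)$ to $S=T=N$, to $T=N$ with arbitrary $S$, and to singletons, respectively. Your explicit handling of the $i=j$ edge case in the $1$-\pef implication (via $S=T=\{i\}$ and $u_i(o') > 2u_i(o) \ge u_i(o)$) is a small point of extra care that the paper's terser statement glosses over, but it does not change the argument.
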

\begin{proof}
PF $\Rightarrow$ $1$-\pgf. Suppose $o \in \O$ is PF but violates $1$-\pgf. Then there exist non-empty $S, T \subseteq N$ and outcome $o' \in \O$ such that $\frac{|S|}{|S \cup T|} \cdot u_i(o') \ge u_i(o)$ for all $i \in S$, with at least one inequality being strict, and $u_i(o') \ge u_i(o)$ for all $i \in N \setminus (S \cup T)$. If $u_i(o) = 0$ for any $i \in N$, then $o$ clearly violates PF, which is the desired contradiction. 

Assume that $u_i(o) > 0$ for all $i \in N$. Then we have
\begin{align*}
    \frac{1}{n} \sum_{i \in N} \frac{u_i(o')}{u_i(o)} &\ge \frac{1}{n} \left( \sum_{i \in S} \frac{u_i(o')}{u_i(o)} + \sum_{i \in N \setminus (S \cup T)} \frac{u_i(o')}{u_i(o)} \right)\\
    &> \frac{1}{n} \left( \sum_{i \in S} \frac{|S \cup T|}{|S|} + \sum_{i \in N \setminus (S \cup T)} 1 \right) 
    \\
    &= \frac{1}{n} \left( |S \cup T| + |N \setminus (S \cup T)|\right)=1,
\end{align*}
contradicting that $o$ is PF.

$1$-\pgf $\Rightarrow$ $(\text{PO}+\text{Core} + 1\text{-\pef})$. Core, PO, and $1$-\pef all impose the same constraints as $1$-\pgf, but only for some of possible pairs of groups $S$ and $T$. PO imposes the constraint for $S=T=N$, the core imposes the constraints for $T=N$ (and any $S$), while $1$-\pef imposes them for $|S|=|T|=1$.
\end{proof}

Together with Proposition~\ref{prop:implications-prelim} and Theorem~\ref{thm:pef-ef}, Proposition~\ref{prop:implications-public} paints a clear picture of the relations among the various fairness notions in the public outcomes model, depicted in Figure~\ref{fig:hierarchy}b, mimicking a similar hierarchy for private goods division depicted in Figure~\ref{fig:hierarchy}a (except that $1$-\pef no longer implies proportionality in the public outcomes model). 

\begin{figure*}[htb!]
    \tikzset{
        mynode/.style={
            draw,
            rectangle,
            rounded corners=5pt,
            inner sep=5pt,
            minimum width=2.5cm
        },
        myarrow/.style={
            ->,
            -{Latex[length=2mm]}
        },
        mydoublearrow/.style={
            <->,
            -{Latex[length=2mm]}
        },
    }
    \centering
    \begin{subfigure}[t]{0.5\textwidth}
    \begin{tikzpicture}[scale=0.8, every node/.style={scale=0.9}]
    \node[mynode] (Y0) at (-0.5,1.5) {Proportional Fairness};
    \node[mynode] (Y00) at (4.5,1.5) {Max Nash Welfare};
    \node[mynode] (Y1) at (2,0) {Group Fairness};
    \node[mynode] (Y2) at (-0.5,-1.5) {Core};
    \node[mynode] (Y3) at (4,-1.5) {Group Envy-Freeness};
    \node[mynode] (Y4) at (-0.5,-3) {Proportionality};
    \node[mynode] (Y5) at (4,-3) {Envy-Freeness};
    \draw[myarrow] (Y0) -- node[near end, above] {*} ++ (Y1);
    \draw[myarrow] (Y00) -- node[near end, above] {*} ++ (Y1);
    \draw[myarrow] (Y0) -- (Y00);
    \draw[myarrow] (Y00) -- node[above] {*} ++ (Y0);
    \draw[myarrow] (Y1) -- (Y2);
    \draw[myarrow] (Y1) -- (Y3);
    \draw[myarrow] (Y2) -- (Y4);
    \draw[myarrow] (Y3) -- (Y5);
    \draw[myarrow] (Y5) -- (Y4);
    \end{tikzpicture}
    \caption{Private goods division.}
    \label{fig:hierarchy-private}
    \end{subfigure}%
    \begin{subfigure}[t]{0.5\textwidth}
    \begin{tikzpicture}[scale=0.8, every node/.style={scale=0.9}]
    \node[mynode] (Y0) at (-0.5,1.5) {Proportional Fairness};
    \node[mynode] (Y00) at (4.5,1.5) {Max Nash Welfare};
    \node[mynode] (Y1) at (2,0) {$1$-Group Harm Ratio};
    \node[mynode] (Y2) at (-0.5,-1.5) {Core};
    \node[mynode] (Y3) at (4.5,-1.5) {$1$-Equal-Sized Group Harm Ratio};
    \node[mynode] (Y4) at (-0.5,-3) {Proportionality};
    \node[mynode] (Y5) at (4.5,-3) {$1$-Individual Harm Ratio};
    \draw[myarrow] (Y0) -- (Y1);
    \draw[myarrow] (Y00) -- node[near end, above] {+} ++ (Y1);
    \draw[myarrow] (Y0) -- (Y00);
    \draw[myarrow] (Y00) -- node[above] {+} ++ (Y0);
    \draw[myarrow] (Y1) -- (Y2);
    \draw[myarrow] (Y1) -- (Y3);
    \draw[myarrow] (Y2) -- (Y4);
    \draw[myarrow] (Y3) -- (Y5);
    \end{tikzpicture}
    \caption{Public outcomes.}
    \label{fig:hierarchy-public}
    \end{subfigure}%
    \caption{The figure depicts a hierarchy of fairness notions for private goods division (already known) and the public outcomes model (based on our novel fairness definitions and results). For private goods division, the implications marked with (\textasteriskcentered{}) hold for cake-cutting~\cite{CFSW19,FSV20}. For the public outcomes model, we show the implications marked with (+) when the utility set $\U$ is compact and upper convex (Theorem~\ref{thm:mnw-pf}).}
    \label{fig:hierarchy}
\end{figure*}
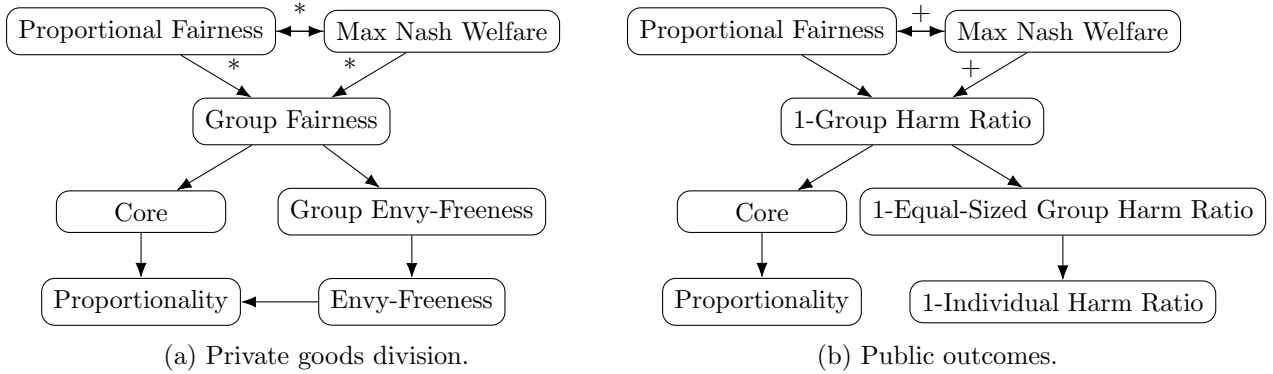

\section{Maximum Nash Welfare Solution}\label{sec:mnw}

Without any conditions on $\U$, an outcome satisfying any of the notions defined in \Cref{sec:prelim-fairness} is not guaranteed to exist. Proportionality cannot be guaranteed in the standard example of allocating an indivisible item between two agents who value it at $1$ ($\U=\{(1,0),(0,1)\}$). An MNW outcome is not guaranteed to exist if $x \in [0,1)$ fraction of a single homogeneous divisible good must be allocated to a single agent with utility $x$ for receiving $x$ fraction of the good ($\U = \{x : x \in [0,1)\}$); in particular, since the agent is not allowed to receive the whole good, a Nash welfare arbitrarily close to $1$ can be attained but a Nash welfare of $1$ cannot be attained. Nonexistence of the other properties follows from the implications in Proposition~\ref{prop:implications-prelim}.

In this section, we show that even proportional fairness, which implies a \pgffull of $1$, can be guaranteed when the utility set $\U$ satisfies two simple conditions:
\begin{itemize}
    \item \textbf{Compactness:} $\U$ is compact.
    \item \textbf{Upper convexity:} For any $\xx,\yy \in \U$ and $\alpha \in [0,1]$, there is $\zz \in \U$ such that $z_i \ge \alpha \cdot x_i + (1-
    \alpha) \cdot y_i$ for all $i \in N$.
\end{itemize}

Compactness and convexity (where even $\zz = \alpha \cdot \xx + (1-
\alpha) \cdot \yy \in \U$ is guaranteed) of the utility set is a common feature of many decision-making models. Compactness of $\U$ holds whenever the outcome set $\O$ is compact and the utility functions $u_i$ are continuous. Convexity of $\U$ holds whenever the outcome set $\O$ is the set of probability distributions (over some set) and one uses expected utilities~\cite{Young95}. In resource allocation contexts, it also holds for the richer class of concave utilities~\cite[Proposition 1]{BFT11}. The celebrated result of \citet{DS61} establishes compactness and convexity of the utility set for cake-cutting.

Upper convexity is a slightly relaxed version of convexity, which allows convexity to fail as long as there is a utility vector that Pareto dominates the one obtained from a convex combination. Intuitively, this relaxed requirement should not cause any issues because one should always be able to switch to such a Pareto dominating outcome, although a priori it is not clear if such an outcome would retain any fairness guarantees.\footnote{For example, in private goods division, it is well known that envy-freeness might not be preserved under Pareto improvements.} Upper convexity is a useful relaxation.

\begin{example}
Consider allocating a single homogeneous divisible good between two agents. The outcome set is the simplex $\O = \{(x_1,x_2) : x_1,x_2 \ge 0, x_1+x_2 = 1\}$. Suppose the agents have utility functions that are linearly increasing in the fraction of resource that they receive, but plateau when at least half of the resource is allocated to them. That is, $u_1((x_1,x_2)) = \min(0.5,x_1)$ and $u_2((x_1,x_2)) = \min(0.5,x_2)$. We have $(0.5,0) \in \U$ and $(0,0.5) \in \U$, but $(0.25,0.25) \notin \U$. Upper convexity is still satisfied due to the fact that $(0.5,0.5) \in \U$, which Pareto dominates $(0.25,0.25)$. 
\end{example}

Finally, it is worth noting that when $\U$ is compact, an MNW outcome is guaranteed to exist because the product is a continuous function, which attains a maximum over a compact set via Weierstrass' extreme value theorem. Similarly, when $\U$ is compact and upper convex, a proportional outcome is guaranteed to exist. For each agent $i$, let $\xx^i \in \argmax_{\xx \in \U} x_i$ be a utility vector with the highest utility for agent $i$ (compactness ensures that the maximum is attained in $\U$). Upper convexity guarantees the existence of a utility vector that (weakly) Pareto dominates $(1/n) \sum_{i \in N} \xx^i$, and this is proportional by definition. Note that because this outcome has a strictly positive Nash welfare, so must any MNW outcome (so it must give a positive utility to each agent).  

\begin{proposition}\label{prop:mnw-prop-exist}
    When $\U$ is compact, an MNW outcome exists. When $\U$ is compact and upper convex, a proportional outcome exists and every MNW outcome has a strictly positive Nash welfare.
\end{proposition}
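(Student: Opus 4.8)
The plan is to handle the three assertions in turn, each reducing to a standard argument once the definitions are unwound in terms of the utility set $\U \subseteq \R_{\ge 0}^n$. Since all the notions involved are utilitarian, it suffices to reason about vectors in $\U$ rather than outcomes in $\O$.

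First, for existence of an MNW outcome under compactness alone, I would observe that maximizing Nash welfare over $\O$ is the same as maximizing the product $\xx \mapsto \prod_{i \in N} x_i$ over $\U$, since the geometric mean is a monotone transform of the product on $\R_{\ge 0}$. This product is a continuous function on $\R_{\ge 0}^n$, and $\U$ is compact by assumption, so Weierstrass' extreme value theorem yields a maximizer $\xx^* \in \U$; any outcome inducing $\xx^*$ is an MNW outcome.

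Second, for existence of a proportional outcome, I would restate proportionality purely in terms of $\U$: a vector $\xx \in \U$ is proportional iff $x_i \ge \frac{1}{n} x_i^{\max}$ for every $i$, where $x_i^{\max} \triangleq \max_{\yy \in \U} y_i$ (the maximum is attained because $\U$ is compact and the $i$-th coordinate map is continuous). Let $\xx^i \in \argmax_{\yy \in \U} y_i$ attain this maximum. The candidate is the average $\bar{\xx} = \frac{1}{n} \sum_{i \in N} \xx^i$: its $i$-th coordinate is $\frac{1}{n} \sum_{j} (\xx^j)_i \ge \frac{1}{n} (\xx^i)_i = \frac{1}{n} x_i^{\max}$, using nonnegativity of the dropped terms, so $\bar{\xx}$ already meets every proportionality threshold. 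The issue is that $\bar{\xx}$ need not lie in $\U$, since we only assume upper convexity; I would therefore invoke upper convexity to produce a $\zz \in \U$ that weakly Pareto dominates $\bar{\xx}$, and since domination preserves all the coordinatewise lower bounds, $\zz$ is proportional.

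The step I expect to need the most care is extending upper convexity, stated only for a convex combination of two vectors, to the $n$-fold combination $\bar{\xx}$. I would prove by induction on $k$ that for any $\xx^1, \ldots, \xx^k \in \U$ and weights $\lambda_1, \ldots, \lambda_k \ge 0$ summing to $1$, there is $\zz \in \U$ with $z_i \ge \sum_j \lambda_j (\xx^j)_i$ for all $i$: peel off the last vector, apply the induction hypothesis to the renormalized prefix to get $\ww \in \U$ dominating $\frac{1}{1-\lambda_k}\sum_{j<k}\lambda_j \xx^j$, then apply the two-vector upper convexity to $\ww$ and $\xx^k$ with weight $1-\lambda_k$ on $\ww$. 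Applying this with equal weights to $\xx^1, \ldots, \xx^n$ closes the existence argument. Finally, for the claim that every MNW outcome has strictly positive Nash welfare, I would use the standing assumption that each agent $i$ has some outcome with positive utility, so $x_i^{\max} > 0$; hence the proportional $\zz$ found above satisfies $z_i \ge \frac{1}{n} x_i^{\max} > 0$ for all $i$, giving $\NW(\zz) > 0$. Since an MNW outcome maximizes Nash welfare, it has value at least $\NW(\zz) > 0$, so its product is positive and every agent receives strictly positive utility.
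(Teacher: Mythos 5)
Your proof is correct and follows essentially the same route as the paper: Weierstrass for MNW existence, the average $\frac{1}{n}\sum_{i \in N} \xx^i$ of per-agent maximizers combined with upper convexity for proportionality, and the resulting strictly positive proportional vector to lower-bound the maximum Nash welfare. The only difference is that you explicitly verify, by induction, that two-vector upper convexity extends to $n$-fold convex combinations --- a detail the paper asserts without proof --- and your induction is correct.
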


However, the existence of a proportionally fair (PF) outcome or even an outcome in the core is not trivial to establish even under compactness and upper convexity of $\U$. The next result shows that under these conditions, every MNW outcome (which exists by Proposition~\ref{prop:mnw-prop-exist}) is PF, making MNW and PF equivalent due to Proposition~\ref{prop:implications-prelim} and yielding the existence of PF. Combining this with Proposition~\ref{prop:implications-public}, we get a $1$-\pgf (and thus a $1$-\pef) implication as well.

\begin{theorem}\label{thm:mnw-pf}
Consider any instance of the public outcomes model where the utility set $\U$ is compact and upper convex. A maximum Nash welfare (MNW) outcome exists, and every MNW outcome is proportionally fair (PF). Consequently, MNW and PF are equivalent, and every MNW outcome has a \pgffull (and thus an \peffull) of $1$. 
\end{theorem}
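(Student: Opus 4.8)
The plan is to isolate one core claim --- \emph{every MNW outcome is PF} --- and derive everything else from the propositions already established. Existence of an MNW outcome, together with the fact that it gives every agent strictly positive utility, is handed to us by \Cref{prop:mnw-prop-exist} (Weierstrass over the compact set $\U$, plus the positive-Nash-welfare guarantee under upper convexity). Granting the core claim, the rest is immediate: \Cref{prop:implications-prelim} supplies the reverse direction PF $\Rightarrow$ MNW, so the two classes coincide; and \Cref{prop:implications-public} gives PF $\Rightarrow$ $1$-\pgf $\Rightarrow$ $1$-\pef, yielding the \pgffull and \peffull conclusions. So essentially all the work is in the core claim.

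To prove it, I would fix an MNW outcome $o$ with utility vector $\xx \in \U$, $x_i = u_i(o)$, where $x_i > 0$ for every $i$ by \Cref{prop:mnw-prop-exist}, and argue by contradiction. If $o$ is not PF, there is some $\yy \in \U$ with $\frac{1}{n}\sum_{i \in N} \frac{y_i}{x_i} > 1$, equivalently $\sum_{i \in N}\frac{y_i}{x_i} > n$. The idea is that nudging $\xx$ a little toward $\yy$ strictly increases Nash welfare. Concretely, set $f(\alpha) = \prod_{i \in N}\bigl(x_i + \alpha(y_i - x_i)\bigr)$ for $\alpha \in [0,1]$, the Nash product along the segment from $\xx$ to $\yy$. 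Since every $x_i > 0$, the function $f$ is smooth near $\alpha = 0$ with $f'(0) = \bigl(\prod_{i} x_i\bigr)\sum_{i \in N}\frac{y_i - x_i}{x_i} = \bigl(\prod_i x_i\bigr)\bigl(\sum_{i}\frac{y_i}{x_i} - n\bigr) > 0$, the positivity being exactly the PF-violation hypothesis. Hence $f(\alpha) > f(0) = \prod_i x_i$ for all sufficiently small $\alpha > 0$; that is, the convex combination $\alpha \yy + (1-\alpha)\xx$ has strictly larger Nash welfare than $\xx$. Note this step never needs $y_i > 0$: positivity of the $x_i$ alone makes $f$ differentiable at $0$.

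The one remaining gap --- and the only place the geometry of $\U$ enters --- is that $\alpha\yy + (1-\alpha)\xx$ may not itself be feasible. This is precisely what upper convexity repairs: it provides $\zz \in \U$ with $z_i \ge \alpha y_i + (1-\alpha)x_i$ for all $i$, and since $\NW$ is coordinatewise nondecreasing we get $\NW(\zz) \ge \NW(\alpha\yy + (1-\alpha)\xx) > \NW(\xx)$, contradicting the optimality of $\xx$. I expect this transfer step to be the conceptual crux: we must be sure the strict first-order gain in Nash welfare is not destroyed when we replace the (possibly infeasible) convex combination by a feasible vector that only Pareto-dominates it --- which works because Pareto domination can only raise the monotone objective $\NW$. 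Everything else is routine differentiation and bookkeeping.
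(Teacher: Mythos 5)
Your proposal is correct and follows essentially the same route as the paper's proof: a first-order perturbation argument along the segment between the MNW utility vector and the PF-violating one, with upper convexity plus coordinatewise monotonicity of $\NW$ restoring feasibility at the end, and \Cref{prop:implications-prelim,prop:implications-public} supplying the remaining implications. The only differences are cosmetic --- you differentiate the Nash product at $\alpha=0$ where the paper takes the left-derivative of the log-Nash-welfare at $\alpha=1$, and your explicit remark that only $x_i>0$ (not $y_i>0$) is needed plays the same role as the paper's restriction of its $f$ to $(0,1]$.
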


\begin{proof}
The existence of an MNW outcome is due to \Cref{prop:mnw-prop-exist}. Let $o \in \O$ be any MNW outcome. Suppose for contradiction that it is not PF. Then, there exists an outcome $o' \in \O$ such that 
\begin{align}\label{eqn:pf-violation}
    \frac{1}{n} \sum_{i \in N} \frac{u_i(o')}{u_i(o)} > 1.
\end{align}
Define a function $f : (0,1] \to \R$ such that for $\alpha \in (0,1]$, 
\[
f(\alpha) = \frac{1}{n} \cdot \sum_{i \in N} \log\big(\alpha \cdot u_i(o) + (1-\alpha) \cdot u_i(o')\big).
\]
Note that this is well-defined because $u_i(o) > 0$ for all agents $i \in N$ (\Cref{prop:mnw-prop-exist}) and $\alpha > 0$, implying that the first term inside the logarithm is strictly positive (and the second term is weakly positive since $\alpha \le 1$ and $u_i(o') \ge 0$). Note that $f(1) = \log \NW(o)$. 

Next, we prove that that there exists $\alpha \in (0,1)$ such that $f(\alpha) > f(1)$. Since $f$ is differentiable in $\alpha$, it suffices to show that the left-derivative of $f$ at $\alpha=1$ is strictly negative. It is easy to see that
\begin{align*}
    \partial_{-}f(1) &= \frac{1}{n} \sum_{i \in N} \frac{u_i(o) - u_i(o')}{u_i(o)} = 1-\frac{1}{n} \cdot \sum_{i \in N} \frac{u_i(o')}{u_i(o)} < 0,
\end{align*}
where the last inequality follows from \Cref{eqn:pf-violation}. 

Thus, there exists $\alpha \in (0,1)$ for which $f(\alpha) > f(1)$. Due to upper convexity of $\U$, there exists an outcome $\hat{o} \in \O$ such that $u_i(\hat{o}) \ge \alpha \cdot u_i(o) + (1-\alpha) \cdot u_i(o')$ for all agents $i \in N$. Thus, $\log \NW(\hat{o}) \ge f(\alpha) > f(1) = \log \NW(o)$, contradicting the fact that $o$ is an MNW outcome. Hence, we have proved that $o$ must also be PF. Since PF implies MNW more generally (\Cref{prop:implications-prelim}), this makes MNW and PF equivalent on every instance of the public outcomes model.
\end{proof}

For private goods division with additive utilities, $1$-\pgf implies PO and $1$-\pef (which in turn implies EF). Hence, Theorem~\ref{thm:mnw-pf} yields an alternative proof of the fact that every MNW allocation is EF and PO in cake-cutting, where the utility set is known to be compact and convex~\cite{DS61}. We remark that our elementary proof is much simpler than the proof by~\citet{SS19}, who proved it by first establishing that an MNW allocation is characterized by a form of market equilibrium (s-CEEI), and the proof by \citet{Wel85}, who showed only the existence of an EF+PO cake allocation using Kakutani's fixed point theorem.

The equivalence between PF and MNW is a well-known fact in many domains, going back to the work of \citet{Kelly97}, who proved it in the context of rate control in communication networks. The proof provided above is very similar, except we strive to not make any assumptions on the nature of the decision making (the outcome set $\O$) or the utility functions of the agents. The key to our proof is defining the function $f$ in terms of the scalar $\alpha$, which makes it differentiable without requiring any perturbations of utilities. 

Once MNW is established as equivalent to PF, one may wonder how much stronger MNW is compared to the next property in the fairness hierarchy, namely $1$-\pgf. For private goods division, the equivalent question would be how strong MNW is compared to group fairness (GF)~\cite{CFSW19}. \citet{FSV20} study allocation rules that map each instance to a set of (tied) allocations, and an allocation rule is said to satisfy GF if, on each instance, every allocation in the set returned by the rule satisfies GF. They prove that the rule returning the set of MNW allocations satisfies GF even for cake-cutting, and, subject to an axiom called replication invariance~\cite{DS63,Var74}, it is the only rule satisfying GF. 

Replication invariance informally requires that if one replicates an instance by creating $k$ copies of each agent and good, then a replication of every allocation returned by the rule on the original instance must be returned by the rule on the new instance. It is difficult to extend to the public outcomes model. 

Nonetheless, the next result shows that a \pgffull of $1$ implies a (tight) $2$-approximation of maximum Nash welfare in the public outcomes model. This result nicely complements that of \citet{FSV20} and has two key advantages: first, it eliminates the need to impose any additional axioms, which significantly changes the technical arguments required, and second, the proof works on an instance-by-instance basis (rather than at the level of rules) in the very general public outcomes model.

\begin{theorem}\label{thm:nw-apx}
Fix any instance of the public outcomes model. For every outcome $o \in \O$ with a \pgffull of $1$ (i.e., $1$-\pgf), we have that $\NW(o) \ge \binom{n}{\lfloor{n/2}\rfloor}^{-1/n} \cdot \NW(o')$ for all $o' \in \O$, and this is tight. Finally, $\binom{n}{\lfloor{n/2}\rfloor}^{1/n} \le 2$ and approaches $2$ as $n \to \infty$.
\end{theorem}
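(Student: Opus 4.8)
The plan is to reduce the multiplicative Nash welfare comparison to a purely combinatorial bound on the vector of utility ratios. Fix a $1$-\pgf outcome $o$ and an arbitrary competitor $o' \in \O$. Since $1$-\pgf implies the core and hence proportionality (\Cref{prop:implications-public,prop:implications-prelim}), and since every agent has some outcome giving her positive utility, proportionality forces $u_i(o) > 0$ for all $i$; and if $u_i(o') = 0$ for some $i$, then $\NW(o') = 0$ and the claim is trivial, so I may assume all utilities positive. Writing $r_i \triangleq u_i(o')/u_i(o)$, the inequality $\NW(o) \ge \binom{n}{\lfloor n/2\rfloor}^{-1/n}\NW(o')$ is exactly equivalent to $\prod_{i \in N} r_i \le \binom{n}{\lfloor n/2\rfloor}$, since $(\prod_i r_i)^{1/n} = \NW(o')/\NW(o)$. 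So the entire first part reduces to this product bound.

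To bound the product, I would relabel agents so that $r_1 \ge r_2 \ge \dots \ge r_n$ and set $m \triangleq |\{i : r_i \ge 1\}|$. If $m = n$, then $o'$ weakly Pareto dominates $o$, and Pareto optimality of $o$ (again \Cref{prop:implications-public}) forces every $r_i = 1$, so the bound holds trivially; hence assume $m < n$. The key step is the choice of deviating coalitions: for each $k \in \{1,\dots,m\}$ I apply the $1$-\pgf condition with $S = \{1,\dots,k\}$, $T = \{m+1,\dots,n\}$ (nonempty as $m < n$), and witness outcome $o'$. Every agent outside $S \cup T$ lies in $\{k+1,\dots,m\}$ and thus has $r_i \ge 1$, i.e. $u_i(o') \ge u_i(o)$, so the second $1$-\pgf condition is met. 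Because $o$ is $1$-\pgf, the first condition must fail; since $r_k$ is the smallest ratio in $S$, this yields $r_k \le \frac{|S \cup T|}{|S|} = \frac{k + (n-m)}{k}$.

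Multiplying these $m$ inequalities telescopes cleanly: $\prod_{k=1}^m \frac{k+(n-m)}{k} = \frac{n!/(n-m)!}{m!} = \binom{n}{m}$, while the remaining factors obey $r_k < 1$ for $k > m$. Hence $\prod_{i \in N} r_i \le \binom{n}{m} \le \binom{n}{\lfloor n/2\rfloor}$, the last step using that the central binomial coefficient is the largest, which completes the approximation guarantee. The final claims follow at once: $\binom{n}{\lfloor n/2\rfloor} \le \sum_{j} \binom{n}{j} = 2^n$ gives $\binom{n}{\lfloor n/2\rfloor}^{1/n} \le 2$, and the Stirling estimate $\binom{n}{\lfloor n/2\rfloor} = \Theta(2^n/\sqrt n)$ gives convergence to $2$.

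For tightness, I would exhibit a family of instances (it suffices to take $\U$ with two utility vectors) realizing the extremal profile above: set $m = \lfloor n/2\rfloor$, $p = n - m$, take $r_k$ slightly below $\frac{k+p}{k}$ for $k \le m$, and $r_k$ slightly below $1$ for $k > m$. A perturbation is unavoidable here, because the ``at least one strict inequality'' clause (and Pareto optimality in the degenerate $m=n$ direction) prevents the bound from being attained exactly; letting the perturbation tend to $0$ drives $\prod_i r_i \to \binom{n}{\lfloor n/2\rfloor}$ and hence $\NW(o')/\NW(o) \to \binom{n}{\lfloor n/2\rfloor}^{1/n}$, and one verifies that $o$ stays $1$-\pgf since every candidate deviation falls just short of the required factor $\frac{|S\cup T|}{|S|}$. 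I expect the main obstacle to be the combinatorial heart of the argument: recognizing that placing all below-par agents ($r_i < 1$) into $T$ while sweeping $S$ over nested prefixes is precisely what makes the product collapse to a central binomial coefficient; once this is identified, tightness amounts to reverse-engineering the same coalitions and disposing of the strictness clause by a limit.
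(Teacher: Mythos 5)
Your proposal is correct and takes essentially the same route as the paper's proof: sort agents by the ratio $u_i(o')/u_i(o)$, apply the $1$-\pgf condition with nested prefix coalitions $S$ and $T$ equal to the set of agents who lose under $o'$, telescope the resulting per-agent bounds into $\binom{n}{m} \le \binom{n}{\lfloor n/2\rfloor}$, and prove tightness via the same $\epsilon$-perturbed two-outcome instance with ratios just below $\frac{k+\lceil n/2\rceil}{k}$, finishing with the $2^n$ bound and Stirling. The only differences are cosmetic (you derive $u_i(o)>0$ via proportionality and dispatch the $T=\emptyset$ case explicitly through Pareto optimality, where the paper works with the reciprocal ratios $u_i(o)/u_i(o')$ and leaves that degenerate case implicit).
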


\begin{proof}
Fix any instance of the public outcomes model. Let $o,o' \in \O$ be any two outcomes satisfying $1$-\pgf. We want to show that $\NW(o) \ge \binom{n}{\lfloor{n/2}\rfloor}^{-1/n} \cdot \NW(o')$. If $\NW(o') = 0$, this holds trivially, so let us assume that $\NW(o') > 0$, i.e., $u_i(o') > 0$ for all agents $i \in N$. 

Sort the agents $i \in N$ in a non-decreasing order of $\frac{u_i(o)}{u_i(o')}$ (this is well-defined because $u_i(o') > 0$ for all $i \in N$); without loss of generality, let us rename the agents so that $\frac{u_1(o)}{u_1(o')} \le \frac{u_2(o)}{u_2(o')} \le \ldots \le \frac{u_n(o)}{u_n(o')}$. Let $L = \{i \in N : u_i(o') > u_i(o)\}$ and $H = \{i \in N : u_i(o') \le u_i(o)\}$. By our renaming, we have that $L = \{1,2,\ldots,k\}$ and $H = \{k+1,k+2,\ldots,n\}$, where $k=|L|$. 

Fix any $i \in L$. Let $S = \{1,2,\ldots,i\} \subseteq L$ and $T = H$. Note that $S \cap T = \emptyset$. Since $o$ is $1$-\pgf, there exists $j \in S$ for which $\frac{|S|}{|S \cup T|} \cdot u_j(o') \le u_j(o)$, i.e., $\frac{u_j(o)}{u_j(o')} \ge \frac{i}{n-k+i}$. Due to the sorting of the agents, this implies $\frac{u_i(o)}{u_i(o')} \ge \frac{i}{n-k+i}$. For any $i \in H$, we have $\frac{u_i(o)}{u_i(o')} \ge 1$ by the definition of $H$. Putting these together, we have
\begin{align*}
\frac{\NW(o)}{\NW(o')} 
&= \left(\prod_{i \in N} \frac{u_i(o)}{u_i(o')}\right)^{1/n}
\ge \left(\prod_{i=1}^k \frac{i}{n-k+i} \cdot \prod_{i=k+1}^n 1\right)^{1/n}
\\
&= \left( \frac{k!(n-k)!}{n!}\right)^{1/n} = \binom{n}{k}^{- {1/n}} \ge \binom{n}{\lfloor{n/2}\rfloor}^{-{1/n}},
\end{align*}
as required. 

To prove tightness, fix an $\epsilon \in (0,\nicefrac{1}{2})$ and consider an example with $\O = \{o,o^*\}$, where 
\[
u_i(o) = \begin{cases}
    \frac{i}{\lceil{n/2}\rceil+i} + \epsilon, &i \le \lfloor{n/2}\rfloor\\
    1+\epsilon, &i > \lfloor{n/2}\rfloor,
\end{cases}
\]
and $u_i(o^*) = 1$ for all $i \in N$. 
We claim that $o$ is $1$-\pgf. To see this, suppose for contradiction that there exist groups of agents $S,T$ for which the two conditions from \Cref{def:pgf} hold (taking $o'=o^*$). Note that $T \supseteq \{ i : i > \lfloor{n/2}\rfloor \}$, since any agent who receives less utility in outcome $o^*$ than in outcome $o$ must be contained in $T$. In particular, $|T| \ge \lceil{n/2}\rceil$. Next, observe that $S \subseteq \{ i : i \le \lfloor{n/2}\rfloor \}$, since agents in $S$ must receive more utility in outcome $o^*$ than in outcome $o$. In particular, writing $|S|=k$, we have $k \le \lfloor{n/2}\rfloor$ and $| S \cup T | \ge k + \lceil{n/2}\rceil$. Additionally, we have $\max_{i \in S} u_i(o) \ge \frac{k}{\lceil{n/2}\rceil+k}+\epsilon$. Letting $i' = \argmax_{i \in S} u_i(o)$,
\begin{align*}
    \frac{|S|}{|S \cup T|} \cdot u_i(o^*) \le \frac{k}{k + \lceil{n/2}\rceil} < \frac{k}{\lceil{n/2}\rceil+k}+\epsilon \le u_{i'}(o),
\end{align*}
contradicting the violation of $1$-\pgf.
To obtain the desired bound, note that 
\begin{align*}
    \frac{\NW(o)}{\NW(o^*)} \xrightarrow{\epsilon \to 0} &\left( \prod_{i=1}^{\lfloor{n/2}\rfloor} \frac{i}{\lceil{n/2}\rceil+i} \right)^{1/n}
    = \binom{n}{\lfloor{n/2}\rfloor}^{-1/n}.
\end{align*}
Finally, $\binom{n}{\lfloor{n/2}\rfloor}^{1/n} \le 2$ holds because there are only $2^n$ subsets (of any size) of a set of size $n$, and the limiting value as $n \to \infty$ follows directly from applying Stirling's approximation.
\end{proof}

When $\U$ is compact and upper convex, an MNW outcome exists (Proposition~\ref{prop:mnw-prop-exist}), and Theorem~\ref{thm:nw-apx} shows that any $1$-\pgf outcome achieves a $\binom{n}{\lfloor{n/2}\rfloor}^{1/n}$-approximation of its Nash welfare. 

\section{Computation}\label{sec:computation}
Having established that MNW (equivalently, PF) outcomes are $1$-\pgf, and thus $1$-\pef, under quite general conditions (Theorem~\ref{thm:mnw-pf}), we now examine the question of computation. We are interested in two computational questions: checking whether a given outcome is $1$-\pef and computing an outcome with a low \peffull (ideally, $1$). 

For the special case of homogeneous divisible goods with additive utilities, an MNW/PF solution can be found in strongly polynomial time~\cite{orlin2010improved}. However, in general, exact polynomial time computation is not possible. We instead turn to approximate solutions.

\begin{definition}[$\alpha$-Proportional Fairness]
    An outcome $o \in \outcomes$ is $\alpha$-Proportionally Fair ($\alpha$-PF) if $\frac{1}{n} \sum_{i \in N} \frac{u_i(o')}{u_i(o)} \le \alpha$ for all $o' \in \outcomes$.
\end{definition}

When the utility space $\U$ is convex and can be described by a polynomial number of linear constraints, as is the case for common settings such as public decision making~\cite{CFS17}, and budget-feasible participatory budgeting~\cite{FGM16} (in all cases, assuming fractional solutions are allowed), a $(1+\epsilon)$-PF outcome can be computed in time polynomial in the input size and $1/\epsilon$ using standard convex programming techniques.\footnote{To see this, note that the function $\text{PF}(\xx,\yy)=\frac{1}{n}\sum_{i \in N} \frac{y_i}{x_i}$ is convex in $\xx$. Therefore, since the supremum of convex functions is itself convex, $\text{PF}(\xx)=\max_{\yy \in \U} \frac{1}{n} \sum_{i \in N} \frac{y_i}{x_i}$ is convex in $\xx$. Thus, $\min_{\xx \in \U} \text{PF}(\xx)$ is a convex optimization problem with an optimal objective value of $1$. It follows that a solution with objective value of $1+\epsilon$ can be computed in time polynomial in the input size and $1/\epsilon$.} However, approximate PF is interesting only insofar as it guarantees some approximation of other, normative, properties. Our most general of these is \pgf. 

By slightly adapting the proof of Proposition~\ref{prop:implications-public}, it is easy to see that approximate PF implies approximate \pgf.

\begin{proposition}
\label{prop:apx-pf-pgf}
    In the public outcomes model, $(1+\epsilon)$-PF $\Rightarrow$ $(1+n\epsilon)$-\pgf for every $\epsilon \ge 0$.
\end{proposition}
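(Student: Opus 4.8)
The plan is to mimic the proof of Proposition~\ref{prop:implications-public} (the PF $\Rightarrow$ $1$-\pgf direction), but to track the slack introduced by the approximation factor. Suppose $o \in \O$ is $(1+\epsilon)$-PF, and suppose for contradiction that $o$ violates $(1+n\epsilon)$-\pgf. Then there exist non-empty groups $S,T \subseteq N$ and an outcome $o' \in \O$ such that $\frac{|S|}{|S \cup T|} \cdot u_i(o') \ge (1+n\epsilon) \cdot u_i(o)$ for all $i \in S$ (with at least one inequality strict) and $u_i(o') \ge u_i(o)$ for all $i \in N \setminus (S \cup T)$. As in the original proof, I would first dispose of the case where some $u_i(o) = 0$: this would force the PF ratio sum to be infinite, contradicting $(1+\epsilon)$-PF, so I may assume $u_i(o) > 0$ for all $i$ and freely divide.

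The core computation is then to lower-bound $\frac{1}{n} \sum_{i \in N} \frac{u_i(o')}{u_i(o)}$ and show it exceeds $1+\epsilon$. For $i \in S$ the violation gives $\frac{u_i(o')}{u_i(o)} \ge (1+n\epsilon) \cdot \frac{|S \cup T|}{|S|}$, and for $i \in N \setminus (S \cup T)$ it gives $\frac{u_i(o')}{u_i(o)} \ge 1$. Dropping the (nonnegative) contributions from $T \setminus S$ and summing exactly as in Proposition~\ref{prop:implications-public}, I get
\[
\frac{1}{n}\sum_{i \in N} \frac{u_i(o')}{u_i(o)} \ge \frac{1}{n}\left( (1+n\epsilon)\,|S \cup T| + |N \setminus (S \cup T)| \right).
\]
The clean bookkeeping is that $(1+n\epsilon)|S \cup T| + |N \setminus (S \cup T)| = |S \cup T| + |N \setminus (S \cup T)| + n\epsilon\,|S \cup T| = n + n\epsilon\,|S \cup T|$, so the right-hand side equals $1 + \epsilon\,|S \cup T| \ge 1 + \epsilon$, using $|S \cup T| \ge 1$. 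This contradicts $(1+\epsilon)$-PF.

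The only subtlety I expect is handling the strictness correctly: in the original proof the strict inequality (one agent in $S$ has a strict bound) is what yields a strict final inequality. Here the factor $(1+n\epsilon)$ already supplies enough slack, so I can work with weak inequalities throughout and still land at $\frac{1}{n}\sum_i \frac{u_i(o')}{u_i(o)} \ge 1 + \epsilon$; whether this is strictly greater than $1+\epsilon$ or merely at least $1+\epsilon$ is immaterial, since PF is defined with a $\le$ and a value of exactly $1+\epsilon$ is still consistent with $(1+\epsilon)$-PF unless I use the strictness. To be safe I would carry the strict inequality from the violating agent through, so that the chain ends in a strict $>$, giving a genuine contradiction with the $\le (1+\epsilon)$ bound. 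There is no real obstacle; the result is essentially a slack-tracking rerun of the earlier proposition, and the main thing is to present the arithmetic identity $n + n\epsilon|S\cup T|$ cleanly and note $|S \cup T| \ge 1$.
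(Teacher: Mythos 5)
Your proposal is correct and takes essentially the same approach as the paper's own proof: both rerun the argument of Proposition~\ref{prop:implications-public} while tracking the $(1+n\epsilon)$ slack, dropping the contributions of $T \setminus S$, and using the identity $(1+n\epsilon)\,|S \cup T| + |N \setminus (S \cup T)| = n + n\epsilon\,|S \cup T|$ together with $|S \cup T| \ge 1$. If anything, your remark about carrying the strict inequality from the violating agent through the chain is slightly more careful than the paper's write-up, which ends with a weak ``$\ge 1+\epsilon$'' and implicitly relies on the strictness built into Definition~\ref{def:pgf} for the contradiction.
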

\begin{proof}
Fix any $\epsilon \ge 0$. Suppose $o \in \O$ is $(1+\epsilon)$-PF but violates $(1+n\epsilon)$-\pgf. Then there exist non-empty $S, T \subseteq N$ and outcome $o' \in \O$ such that $\frac{|S|}{|S \cup T|} \cdot u_i(o') \ge (1+n\epsilon)u_i(o)$ for all $i \in S$, and $u_i(o') \ge u_i(o)$ for all $i \in N \setminus (S \cup T)$. If $u_i(o) = 0$ for any $i \in N$, then $o$ clearly violates $(1+\epsilon)$-PF, which is the desired contradiction. 

Assume that $u_i(o) > 0$ for all $i \in N$. Then we have
\begin{align*}
    \frac{1}{n} \sum_{i \in N} \frac{u_i(o')}{u_i(o)} &\ge \frac{1}{n} \left( \sum_{i \in S} \frac{u_i(o')}{u_i(o)} + \sum_{i \in N \setminus (S \cup T)} \frac{u_i(o')}{u_i(o)} \right)\\
    &\ge \frac{1}{n} \left( \sum_{i \in S} \frac{(1+n\epsilon)|S \cup T|}{|S|} + \sum_{i \in N \setminus (S \cup T)} 1 \right)\\
    &= \frac{1+n\epsilon}{n} \cdot |S \cup T| + \frac{1}{n} \cdot |N \setminus (S \cup T)|\\
    &\ge 1+\epsilon \cdot |S \cup T| \ge 1 + \epsilon,
\end{align*}
contradicting that $o$ is $(1+\epsilon)$-PF.    
\end{proof}

Of course, our results have not ruled out the possibility of computing an exactly $1$-\pef, or even a $1$-\pgf, outcome in polynomial time; perhaps this can be done more directly without having to appeal to (approximate) proportional fairness. We leave these as open questions. 

With respect to checking whether a given outcome $o \in O$ is $1$-\pef (i.e., has a \peffull of $1$), whenever the feasible utility space $\U$ can be described by a polynomial number of linear constraints, the problem can be solved in polynomial time. To do so, we solve a linear feasibility program for every pair of agents $i,j$ to check if there exists $\xx \in \U$ for which $x_i > 2 \cdot u_i(o)$ and $x_k \ge u_k(o)$ for all $k \in N \setminus \{ i,j \}$, which is a single linear program.

\textbf{Indivisible goods.} An interesting special case is that of indivisible goods allocation with additive utilities. Here, it is known that a $1$-\pef outcome cannot be guaranteed. For the case of two agents, we know that EF and $1$-\pef are equivalent (Theorem~\ref{thm:pef-ef}), and it is known via a standard reduction from the Partition problem that determining the existence of an EF outcome is NP-complete even with two agents.

\begin{corollary}
For indivisible goods allocation with additive utilities, checking if a $1$-\pef outcome exists is NP-complete. 
\end{corollary}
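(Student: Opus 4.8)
The plan is to push the entire problem down to the two-agent case, where Theorem~\ref{thm:pef-ef} guarantees that $1$-\pef and envy-freeness coincide, and then to invoke the (standard) NP-completeness of deciding EF existence for two agents. Both the upper and lower bounds then come essentially for free from this equivalence, so the work is in identifying the right reduction and being honest about where membership is clean.

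For membership I would use the certificate ``an allocation $A$'' and verify that it witnesses a yes-instance. With two agents, verifying $1$-\pef reduces by Theorem~\ref{thm:pef-ef} to verifying EF, which is polynomial: one simply checks $u_1(A_1) \ge u_1(A_2)$ and $u_2(A_2) \ge u_2(A_1)$. For the hardness I would reduce from \textsc{Partition}: given positive integers $a_1,\dots,a_m$ with $\sum_g a_g = 2B$, build an instance with $m$ goods in which both agents value good $g$ at $a_g$. Since the two utility functions are additive and identical, an allocation is EF precisely when each agent's bundle is worth exactly $B$, i.e.\ precisely when the $a_g$'s admit a balanced bipartition. Hence an EF --- equivalently, by Theorem~\ref{thm:pef-ef}, a $1$-\pef --- allocation exists if and only if the \textsc{Partition} instance is solvable. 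Together these show the two-agent problem is NP-complete, and since two-agent instances are a special case, the general existence problem is NP-hard.

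The delicate point is NP-membership once the number of agents is unrestricted. The paper's companion result --- that deciding whether a \emph{given} allocation is $1$-\pef is coNP-complete --- means the naive ``guess-and-check'' certificate is not verifiable in polynomial time for three or more agents, so a priori the general problem only sits in $\Sigma_2^p$. I would therefore anchor the matching NP upper bound on the two-agent regime, where verification collapses to EF-checking; this already yields NP-hardness in full generality together with NP-completeness for the natural (two-agent) subclass that the reduction inhabits. I expect this membership gap --- not the hardness --- to be the real obstacle, and I would flag that closing it for unbounded $n$ would require a certificate that is not simply the allocation itself, which the coNP-completeness of verification makes unlikely.
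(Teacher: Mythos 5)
Your proposal is correct and follows essentially the same route as the paper, which derives the corollary exactly as you do: from the two-agent equivalence of $1$-\pef and EF (Theorem~\ref{thm:pef-ef}) combined with the standard Partition reduction showing NP-completeness of EF existence even for two agents. Your explicit flagging of the NP-membership subtlety for unbounded $n$ --- that the coNP-completeness of verification (Theorem~\ref{thm:check-pEF}) blocks the naive guess-and-check certificate, so membership cleanly holds only in the two-agent regime the reduction inhabits --- is a careful point the paper leaves implicit.
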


Lastly, we show that it is hard to check whether a given allocation of indivisible goods is $1$-\pef. This is in sharp contrast with checking envy-freeness, which can be done easily by comparing the utility of every agent with their value for every other agent's allocation.

\begin{table*}[t]
    \centering
    \renewcommand{\arraystretch}{1.25}
\begin{tabular}{|c|c|c c c|c c c|c c c|c c c|c c c|c|}  
  \hline  
  \multicolumn{2}{|c|}{} &\multicolumn{3}{c|}{$G^1$} & \multicolumn{3}{c|}{$G^2$} & \multicolumn{3}{c|}{$G^3$} & \multicolumn{3}{c|}{$G^4$} & \multicolumn{3}{c|}{$G^5$} &  \\  
  \cline{3-18}
  \multicolumn{2}{|c|}{} &$g_1^1$ & ... & $g_{3d}^1$ & $g_1^2$ & ... & $g_d^2$ & $g_1^3$ & ... & $g_d^3$ & $g_1^4$ & ... & $g_{3d}^4$ & $g_1^5$ & ... & $g_{3d}^5$ & $g^*$\\  
  \hline  
  \multirow{3}{*}{$Y$} & $y_1$ & \circled{1} & $\ldots$ & 0 & 0 & $\ldots$ & 0 & 0 & $\ldots$ & 0 & $1-\epsilon$ & & 0 & $\epsilon$ & & 0 & 0 \\
  & $\vdots$ & $\vdots$ & $\ddots$ & $\vdots$ & $\vdots$ & $\ddots$ & $\vdots$ & $\vdots$ & $\ddots$ & $\vdots$ & $\vdots$ & $\ddots$ & $\vdots$ & $\vdots$ & $\ddots$ & $\vdots$ & $\vdots$\\
  & $y_{3d}$ & 0 & $\ldots$ & \circled{1} & 0 & $\ldots$ & 0 & 0 & $\ldots$ & 0 & 0 & & $1-\epsilon$ & 0 & & $\epsilon$ & 0 \\
   \hline  
  \multirow{3}{*}{$Z$} & $z_1$ &$n_1$& $\ldots$ &$n_{3d}$& \circled{1} & $\ldots$ & 0 & $\circled{d}$& $\ldots$ & 0 & 0 & $\ldots$ & 0 & 0 & $\ldots$ & 0 & 0  \\
  & $\vdots$ & $\vdots$ & $\ddots$ & $\vdots$ & $\vdots$ & $\ddots$ & $\vdots$ & $\vdots$ & $\ddots$ & $\vdots$ & $\vdots$ & $\ddots$ & $\vdots$ & $\vdots$ & $\ddots$ & $\vdots$ & $\vdots$ \\
  & $z_d$ &$n_1$& $\ldots$&$n_{3d}$& 0 & $\ldots$ &$\circled{1}$& 0 & $\ldots$ &$\circled{d}$& 0 & $\ldots$ & 0 & 0 & $\ldots$ & 0 & 0 \\
  \hline
   \multirow{2}{*}{$W$} & $w_1$ & 0 & $\ldots$ & 0 &2& $\ldots$ &2& 0 & $\ldots$ & 0 &$\circled{\frac{1}{3}}$& $\ldots$ &$\circled{\frac{1}{3}}$& 0 & $\ldots$ & 0 & $\epsilon$ \\
  & $w_2$ & 0 & $\ldots$ & 0 &0 & $\ldots$ & 0 & 0 & $\ldots$ & 0 & 0 & $\ldots$ & 0 &$\circled{\epsilon}$ & $\ldots$ & $\circled{\epsilon}$& $\circled{\epsilon}$ \\
  \hline
\end{tabular} 
\caption{The indivisible goods allocation instance from the proof of Theorem~\ref{thm:check-pEF}. The outcome $A$ is indicated by circles.}
\label{tab:check-pEF}
\end{table*}

\begin{theorem}
\label{thm:check-pEF}
For indivisible goods allocation with additive utilities, checking if a given outcome is $1$-\pef is coNP-complete.
\end{theorem}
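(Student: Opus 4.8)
The goal is to show that checking whether a given allocation of indivisible goods is $1$-\pef is coNP-complete. Membership in coNP is the easy direction: a certificate that an allocation $A$ is \emph{not} $1$-\pef consists of a pair of agents $i,j$ together with an alternative allocation $A'$ witnessing the violation, i.e., $\frac{1}{2} u_i(A'_i) > u_i(A_i)$ and $u_k(A'_k) \ge u_k(A_k)$ for all $k \in N \setminus \{i,j\}$. Since utilities are additive, verifying these inequalities for a proposed $A'$ takes polynomial time, so the complement is in NP and the problem is in coNP. The substance is the coNP-hardness reduction, and the table in \Cref{tab:check-pEF} already lays out the intended instance structure.

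The plan is to reduce from a standard NP-complete problem — the structure of the gadget (three groups $Y$, $Z$, $W$ of agents, item-groups $G^1,\ldots,G^5$ of sizes $3d$ or $d$, values involving $n_1,\ldots,n_{3d}$ and a special item $g^*$) strongly suggests a reduction from \textsc{3-Dimensional Matching} or from a \textsc{Numerical Matching}/\textsc{Partition}-type problem where the $Z$-agents' values $n_1,\ldots,n_{3d}$ on the $Y$-items encode the target numbers. I would fix a designated agent (a $Z$-agent or $w_1$) as the potential "harmed" agent $i$ whose utility the alternative allocation $A'$ tries to more-than-double, and a designated agent as the "sacrificed" agent $j$ (the single agent allowed to drop). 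The designed allocation $A$ (shown by the circled entries) gives each agent exactly the utility that makes it a $1$-\pef allocation \emph{if and only if} the reduction's combinatorial instance is a NO-instance. Concretely, I would arrange values so that the \emph{only} way for agent $i$ to more than double its utility while harming just one other agent is to reshuffle the items in $G^1$ (the $Y$-items) in a way that corresponds exactly to a feasible matching/packing of the source instance; the $\epsilon$-perturbations and the items $G^4, G^5, g^*$ act as slack/consistency gadgets that force $A'$ to keep all the non-$\{i,j\}$ agents weakly happy only when a valid combinatorial solution exists.

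The two directions of correctness are then: (completeness) if the source instance has a solution, build the witnessing $A'$ explicitly from it, reallocating the $Y$-items to the $Z$-agents according to the matching so that agent $i$'s utility jumps past the $2\times$ threshold while exactly one agent $j$ absorbs the loss and everyone else is held at their $A$-utility — hence $A$ is not $1$-\pef; (soundness) conversely, if $A$ is not $1$-\pef, take any witnessing $(i,j,A')$ and argue that the constraint "at most one agent harmed, everyone else weakly improved, and agent $i$ more than doubled" is so tight that $A'$ must redistribute the $Y$-items in the integral, consistent pattern that decodes to a solution of the source instance. The factor of $\frac{1}{2}$ in the \pef definition is what forces the doubling threshold, and the circled values in the table are chosen precisely so that doubling is achievable exactly at the boundary of a feasible packing.

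The main obstacle I anticipate is the soundness direction: I must rule out "cheating" alternative allocations $A'$ that exploit the freedom of indivisible reallocation to double agent $i$'s utility \emph{without} corresponding to a genuine combinatorial solution — for instance by harming some agent $k \notin \{i,j\}$ by a tiny amount that the $\epsilon$-gadgets are supposed to forbid, or by splitting the items in $G^2,G^3$ in an unintended way. Getting the numerical values ($n_i$, the $d$'s, $1-\epsilon$, $\epsilon$, the $\frac{1}{3}$ and $2$ entries) to simultaneously (i) admit the honest witness and (ii) exclude every dishonest one is the delicate quantitative heart of the argument, and is exactly where the carefully staggered weights in \Cref{tab:check-pEF} must be verified inequality-by-inequality.
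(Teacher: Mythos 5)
Your coNP membership argument is correct and is exactly the paper's (a violating pair $i,j$ together with the alternative allocation $A'$ is a polynomial-time-checkable disqualifying certificate). Your hardness plan also points in the right direction --- the paper reduces from 3-Partition, with the input numbers $c_1,\ldots,c_{3d}$ (each strictly between $\nicefrac14$ and $\nicefrac12$) appearing as the $Z$-agents' values for the goods in $G^1$ --- and you correctly identify soundness as the crux. But the proposal stops exactly where the proof has to start: you never commit to a source problem, never identify which agent can serve as the doubling agent $i$ or the sacrificed agent $j$ (you hedge with ``a $Z$-agent or $w_1$''), and you explicitly defer the ``delicate quantitative heart'' to future verification. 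That deferred part is the theorem; as written, this is a schema with a genuine gap, not a proof.

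Concretely, three ideas are missing, and all three are what make the paper's soundness direction go through. First, the allocation $A$ is engineered so that every agent \emph{except} $w_1$ already holds at least half of her value for the entire set of goods, $u_i(A_i) \ge u_i(M)/2$; since utilities are additive, no such agent can ever more than double, so the harmed agent in any violating pair is forced to be $w_1$ --- this single normalization trick collapses the space of ``cheating'' witnesses you worry about. Second, a cascade argument pins down $j$: for $w_1$ to more than double she must grab $k \ge 2$ goods from $G^2$, so $k$ agents of $Z$ lose their $G^2$ good and must be made whole with triples from $G^1$ (each $c_j < \nicefrac12$ forces at least three goods), which displaces agents of $Y$, who can only be made whole by taking their $G^4$ and $G^5$ goods, which necessarily strictly lowers $w_2$'s utility; hence $j = w_2$ and \emph{every} other agent must weakly improve. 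Third, a counting bound $u_{w_1}(A'_{w_1}) \le 2k + (3d-3k)/3 + \epsilon = d + k + \epsilon$ shows doubling past $2d$ forces $k = d$, i.e., $w_1$ takes all of $G^2$, so all $3d$ goods of $G^1$ must be distributed among the $d$ agents of $Z$ in triples of value exactly $1$ --- precisely a 3-Partition solution (and, conversely, a 3-Partition solution yields the explicit witness $A'$, which is the easy direction you sketch). Your plan anticipates the shape of this argument but supplies none of its load-bearing steps, so it should be assessed as an outline of the paper's approach rather than a correct proof.
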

\begin{proof}
First, note that membership in coNP holds since an outcome $o$ can be certified to not be $1$-\pef by providing an alternative outcome $o'$ and two agents $i,j$ for which the violating conditions in \Cref{def:pub-ef} hold.

To show coNP-completeness, we reduce from the 3-Partition problem: Given a multiset $C$ of $3d$ numbers $c_1$, \ldots , $c_{3d}$ lying
strictly between $1/4$ and $1/2$, can $C$ be partitioned into $d$ triplets $C_1$, \ldots,  $C_d$ such that the sum of
the members of each triplet is 1?

Given a 3-Partition instance, we define an indivisible goods instance (along with an outcome) with $m=11d+1$ goods arranged into the following subsets: 
\renewcommand{\arraystretch}{1.2}
\[
\begin{array}{c@{\quad}c@{\quad}c}
G^1 = \{ g^1_1, \ldots, g^1_{3d}\},& G^2 = \{ g^2_1, \ldots, g^2_d\}, & G^3 = \{ g^3_1, \ldots, g^3_d\}, \\
G^4 = \{ g^4_1, \ldots, g^4_{3d}\}, &
G^5 = \{ g^5_1, \ldots, g^5_{3d}\},
 \\
\end{array}
\]
along with a single good $g^*$. There are $n = 4d+2$ agents, arranged into the following subsets:
\begin{align*}
Y 
= \{ y_1, \ldots, y_{3d}\}, \quad
Z 
= \{ z_1, \ldots, z_d \}, \quad 
W 
= \{ w_1, w_2 \}.
\end{align*}
Let $0<\epsilon<1$. Valuations that the agents have for the goods are defined as follows (for ease of reading, we use the notation $v_i(g)$ rather than $v_{i,g}$; valuations that are not directly specified are equal to zero). For every $i \in Y$, $v_i(g^1_i)=1$, $v_i(g^4_i)=1-\epsilon$, and $v_i(g^5_i)=\epsilon$. For every $i \in Z$ and every $j \in [3d]$, $v_i(g^1_j)=n_j$. Thus, these agents and items directly encode the values from the 3-Partition instance. Further, for every $i \in Z$, $v_i(g^2_i)=1$, and $v_i(g^3_i)=d$. For agent $w_1$ we have $u_{w_1}(g^2_j)=2$ for all $j \in [d]$, $u_{w_1}(g^4_j)=1/3$ for all $j \in [3d]$, and $u_{w_1}(g^*)=\epsilon$. For agent $w_2$, we have $u_{w_2}(g^5_j)=\epsilon$ for all $j \in [3d]$ and $u_{w_2}(g^*)=\epsilon$. 

The allocation $A$ is as follows, and is also depicted in \Cref{tab:check-pEF}. For every $j \in [3d]$, good $g^1_j$ is allocated to agent $y_j$. For every $j \in [d]$, goods $g^2_j$ and $g^3_j$ are allocated to agent $z_j$. Agent $w_1$ is allocated the entire set of goods $G^4$. Finally, agent $w_2$ is allocated the entire set of goods $G^5$ as well as good $g^*$. We have $u_{y_i}(A_{y_i})=1$ for all $i \in [3d]$, $u_{z_i}(A_{z_i})=d+1$ for all $i \in [d]$, $u_{w_1}(A_{w_1})=d$, and $u_{w_2}(A_{w_2})=(3d+1)\epsilon$. Note that for all agents $i \neq w_1$, it is the case that $u_i(A_i) \ge u_i(M)/2$.

First suppose that there exists a solution to the 3-Partition instance. Then we can define an outcome $A'$ as follows. For every $i \in [3d]$, agent $y_i \in Y$ is allocated $g^4_i$ and $g^5_i$. For every $i \in [m]$, agent $z_i \in Z$ is allocated $g^3_i$ and a set of three goods from $G^1$ with total value exactly equal to 1 (this is possible by the assumption of a 3-Partition solution). Agent $w_1$ is allocated $g^*$ and all goods in $G^2$. We have $u_{y_i}(A'_{y_i})=1=u_{y_i}(A_{y_i})$ for all $i \in [3d]$, $u_{z_i}(A'_{z_i})=d+1=u_{z_i}(A_{z_i})$ for all $i \in [m]$, and $u_{w_1}(A'_{w_1})=2m+\epsilon > 2d=2u_{w_1}(A_{w_1})$. Since only agent $w_2$ receives less utility under $A'$ than under $A$, allocation $A'$ is witness to a violation of $1$-\pef.

Next, suppose that there does not exist a solution to the 3-Partition instance. We will show that outcome $A$ is $1$-\pef.

Since $u_i(A_i) \ge u_i(M)/2$ for all $i \neq w_1$, the only agent who can more than double her utility in some outcome $A' \neq A$ is $w_1$. Consider such an outcome $A'$ and suppose for contradiction that $A'$ witnesses a violation of $1$-\pef. 

For it to be the case that $u_{w_1}(A'_{w_1})>2u_{w_1}(A_{w_1})=2m$, agent $w_1$ must be allocated some goods from $G^2$ (each of which she has utility 2 for). Suppose that she receives $k$ such goods. We may assume that $m \ge 3$ and therefore receiving only $k=1$ good from $G^2$ is not sufficient to double her utility; assume therefore that $k \ge 2$. We now follow a chain of implications about what this means for the outcome $A'$, given that there can be at most one agent $i$ with $u_i(A'_i) < u_i(A_i)$.

At least $k$ agents in $Z$ must not be allocated any $g^2_j \in G^2$ under allocation $A'$. Denote this set of agents by $Z'$. For $A'$ to witness a violation of $1$-\pef, it must be the case that for at least $k-1 \ge 1$ of these agents in $Z'$, they receive utility at least $u_{z_i}(A'_{z_i}) \ge d+1$, which implies that $A'_{z_i}= \{ g^3_{z_i} \} \cup X$ for some $X \subseteq G^1$ with $|X| \ge 3$. The condition $X \subseteq G^1$ is required because, other than $g^3_{z_i}$ (for which agent $z_i$ has utility $d$) and $g^2_{z_i}$ (which is not allocated to agent $z_i$, by definition), agents in $Z'$ only get positive utility from $G^1$, and $u_{z_i}(g^1_j)<1/2$ for all $j \in [3d]$. 

Now, there exist at least three agents $y_i \in Y$ for whom $g^1_i \not \in A'_{y_i}$. Since $A'$ witnesses a violation of $1$-\pef, it is necessary for at least two of these agents to have $u_{y_i}(A'_{y_i}) \ge u_{y_i}(A_{y_i}) =1$, which implies that $A'_{y_1} = \{ g^4_i, g^5_i \}$. In particular, if good $g^5_i$ is allocated to agent $y_i$ under $A'$, then it is no longer allocated to agent $w_2$, which means that $u_{w_2}(A'_{w_2}) < u_{w_2}(A_{w_2})$. Thus, for all agents $i \neq w_2$, it must be the case that $u_i(A'_i) \ge u_i(A_i)$, or else $A'$ would not witness a violation of $1$-\pef.

Having established this fact, the remaining analysis is simplified. Agent $w_1$ receives $k$ goods from $G^2$ under $A'$. Therefore, at least $3k$ goods from $G^1$ must be allocated to agents in $Z'$, to compensate each of them for their lost good from $G^2$. Accordingly, at least $3k$ goods from $G^4$ and $3k$ goods from $G^5$ must be allocated to agents in $Y$, to compensate those agents who lost a $G^2$ good. Since agent $w_1$ values each $G^4$ good at 1/3, her utility from $A'$ is $u_{w_1}(A'_{w_1}) \le 2k+(3d-3k)/3+\epsilon = d+k+\epsilon$ (the additional $\epsilon$ comes from allocating $g^*$ to $w_1$). Thus, unless $k=d$, $u_{w_1}(A'_{w_1})<2d$, contradicting the assumption that $A'$ is a witness to a $1$-\pef violation. So assume $k=d$. But in this case, all of the goods in $G^1$ need to be allocated to agents in $Z$. Since there does not exist a solution to the 3-Partition problem, it is impossible to assign them so that $u_{z_i}(A'_{z_i}) \ge d+1=u_{z_i}(A_{z_i})$ for all $i \in [d]$, since doing so would require each agent $z_i$ receiving utility exactly 1 from a subset of the $G^1$ goods. Again, this contradicts $A'$ witnessing a $1$-\pef violation. Therefore, $A$ is $1$-\pef.
\end{proof}

Our proof uses a variable number of agents. For two agents, $1$-\pef is equivalent to EF, so checking whether a given outcome is $1$-\pef is in P. Could it perhaps be in P for any constant number of agents? We leave this as an open question. 

\section{Experiments}\label{sec:experiments}
In this section, we validate our theoretical definitions by running numerical simulations on data from three real-world settings: private goods allocation, peer review, and participatory budgeting. 

Consistent with real-world constraints in these settings, we require that solutions be discrete. In this case, the utility sets are not upper convex and it is therefore no longer the case that an outcome with an \pef/\pgf of at most 1---or even a finite \pef/\pgf---always exists.
Our approach will be to evaluate the individual and group harm ratios of prominent rules for these settings.

As a minimal requirement, our definition should differentiate between rules across a wide range of settings, since the practical consequence of any fairness definition is limited to those cases where it can make a specific recommendation. Second, in settings where well-established and validated fairness notions already exist, we would like \pef to align well with those existing notions.\footnote{Of course, the strength of \pef lies in its broad applicability to all these settings.} 

\begin{figure*}[htb!]
    \centering
    \begin{subfigure}{0.34\textwidth}
        \centering
        \includegraphics[width=\textwidth]{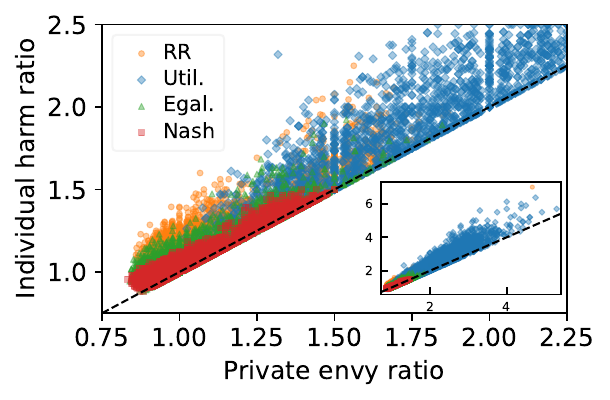}
        \caption{Uniform Multinomial}
        \label{fig:random-scatter}
    \end{subfigure}
    \begin{subfigure}{0.31\textwidth}
        \centering
        \includegraphics[width=\textwidth]{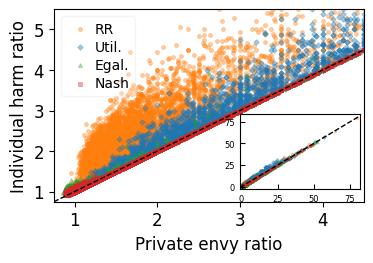}
        \caption{Dirichlet-Multinomial}
        \label{fig:market-scatter}
    \end{subfigure}
    \begin{subfigure}{0.31\textwidth}
        \centering
        \includegraphics[width=\textwidth]{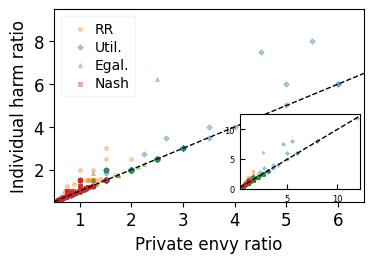}
        \caption{Spliddit}
        \label{fig:spliddit-scatter}
    \end{subfigure}
    \\
    \begin{subfigure}{0.325\textwidth}
        \centering
        \includegraphics[width=\textwidth]{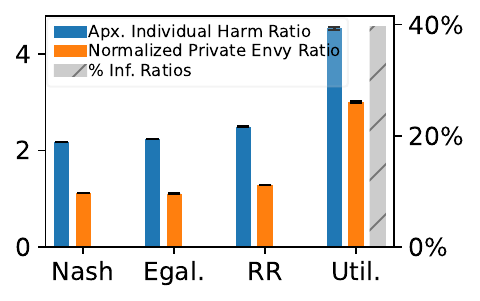}
        \caption{Uniform Multinomial}
        \label{fig:random-bar}
    \end{subfigure}
    \begin{subfigure}{0.325\textwidth}
        \centering
        \includegraphics[width=\textwidth]{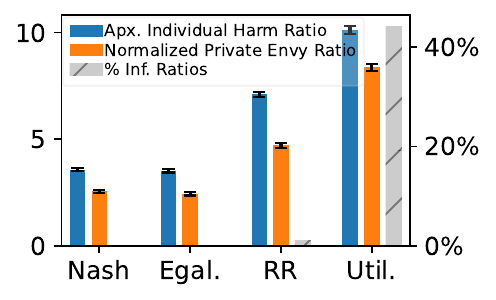}
        \caption{Dirichlet-Multinomial}
        \label{fig:market-bar}
    \end{subfigure}
    \begin{subfigure}{0.325\textwidth}
        \centering
        \includegraphics[width=\textwidth]{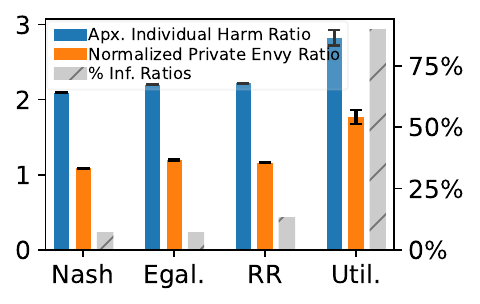}
        \caption{Spliddit}
        \label{fig:spliddit-bar}
    \end{subfigure}
    \caption{Scatter plots showing instance-wise \peffull and private envy ratio, and bar charts displaying their averages as well as the fraction of instances with infinite ratios. In the scatter plots, the inset plot shows the full set of instances, while the main plot is zoomed into its lower left region that contains approximately 90\% of the instances.}
    \label{fig:private-goods-plots}
\end{figure*}

\subsection{Private Goods}
Our first setting is that of indivisible private goods allocation. Every agent $i$ has a value $v_{i,g}$ for every good $g$, with valuations normalized so that $\sum_g v_{i,g}$ is independent of $i$. 

\textbf{Datasets.} We use two synthetic and one real datasets. For the synthetic ones, we first consider agent valuations drawn from a uniform multinomial distribution (i.e., randomly dividing $200$ points between $m$ goods) and, second, valuations drawn from a Dirichlet-multinomial distribution. The Dirichlet-multinomial can be thought of as first drawing an underlying ``market value'' for every good from a uniform Dirichlet distribution, and then drawing agent valuations from a multinomial distribution whose expectations equal to the market values, thus correlating agent valuations. For the synthetic experiments, we vary $n \in [3, 16]$ and $m \in [n, 5n]$ and generate 10 samples for each combination of $n$ and  $m$. Finally, we also use real data from the popular fair division website \href{www.spliddit.org}{Spliddit.org}~\cite{GP14}. 
\newcommand{\egalp}{Egal.\xspace}
\newcommand{\nashp}{Nash\xspace}
\newcommand{\utilp}{Util.\xspace}
\newcommand{\rrp}{RR\xspace}

\textbf{Rules.} For every fair division instance, we compute the outcomes that maximize the egalitarian welfare (minimum utility), the Nash welfare (product of utilities), and the utilitarian welfare (sum of utilities) --- we refer to these rules by \egalp, \nashp, \utilp respectively. We additionally consider round robin (\rrp), which fixes an ordering of the agents and allows each to pick one good at a time in a cyclic fashion; this, like Nash, is guaranteed to achieve an approximate envy-freeness guarantee~\cite{CKMP+19}.

\textbf{Measurements.} For each dataset, we display two graphs (\Cref{fig:private-goods-plots}). The first (Figure~\ref{fig:private-goods-plots}a to \ref{fig:private-goods-plots}c) is a scatter plot with each point being an outcome, color coded by the rule that produced it. The vertical axis measures the \peffull (\pef), and the horizontal axis measures the private envy ratio, defined as $\text{per} \triangleq \frac{1}{2} (1+\max_{i,j \in [n]} u_i(A_j)/u_i(A_i))$, which is the maximum multiplicative increase in utility that any agent could get by swapping their assigned bundle of goods with some other agent, linearly scaled so that an \pef of $\alpha$ guarantees $\text{per} \le \alpha$. Hence, all points are guaranteed to lie on or above the 45 degree line $y=x$. Note that EF guarantees a private envy ratio of at most $1$. 

The second graph for each dataset is a bar graph that shows, for each rule, the average IHR, the average private envy ratio, and the fraction of instances with infinite IHR and private envy ratio. Infinite ratios occur whenever some agent receives zero utility from their allocation, and the average ratios are taken over instances with finite ratios.

\textbf{Observations.} In all datasets, we observe that \peffull and private envy ratio are strongly correlated, with the points in the scatter plot lying somewhat close to the 45 degree line. The Nash and egalitarian rules give the lowest public and private envy ratios. Round robin is competitive on the uniform Dirichlet and Spliddit datasets, but less so for the Dirichlet-multinomial model. The utilitarian rule has high \peffull and private envy ratio, and a large fraction of instances with infinite ratios.

\subsection{Peer Review}
\textbf{Datasets.} In the peer review setting, we use real data from three conferences: ICLR 2018, CVPR 2017, and CVPR 2018. Each submission is an ``agent'' who has a value for being assigned a reviewer equal to a (system-generated) similarity score reflecting the reviewer's expertise in the paper's topic. Following the methodology of \citet{AMS23}, we assume that each submission has exactly one author, and that the set of reviewers coincides with the set of authors. The matching of reviewers to submissions is constrained in that no author can review their own submission, each submission must be assigned three distinct reviewers, and each reviewer must be assigned three submissions to review. 

We also follow the methodology of~\citet{AMS23} to infer authorship information. The ICLR 2018 dataset provides a conflict matrix, indicating when reviewer $j$ has a conflict with submission $i$. We find a maximum matching on the conflict matrix, and use this as the authorship matrix. We then subsample 300 instances of size 50 (that is, 50 submissions along with their 50 authors as the pool of reviewers). For the CVPR datasets, no conflict matrices are available. Instead, we again sample 300 instances of size 50 by randomly selecting one paper at a time. As its author, we select the author with the highest similarity score from those not already added to the pool of reviewers, and add that author to the pool of reviewers.

\textbf{Rules.} We compare five rules for peer review. The Toronto Paper Matching System (TPMS)~\citep{CZ13} maximizes the utilitarian welfare, and we also include the Nash welfare-maximizing assignment and the PeerReview4All (PR4A)~\citep{SSS21} algorithm, which is a heuristic algorithm for the leximin solution, which maximizes the minimum utility (egalitarian welfare) and, subject to that, maximizes the second-minimum, and so on. We also compare the Core-Based Reviewer Assignment (CoBRA) algorithm of~\citet{AMS23}, which computes a reviewer assignment that lies in the core, and an appropriate adaptation of the round robin rule. In the constrained setting of peer review, round robin works by letting agents choose a single reviewer in a cyclic fashion according to some fixed order, subject to their choice of reviewer leaving at least one feasible reviewer assignment remaining that is consistent with earlier choices.\footnote{We also assume that the minimum utility of any author for any reviewer is nonzero. Specifically, we set this minimum score to $10^{-3}$ while the maximum utility of each author for a reviewer is exactly $1$. This avoids infinite ratios, and it is reasonable to assume receiving more reviews (with TPMS score of 0) is at least slightly better than receiving fewer.}

\textbf{Measurements.} For the peer review setting the standard definition of envy does not make sense, since the reviewers allocated to paper $j$ may not be allowed to be allocated to paper $i$ (in particular, if the author of paper $i$ is a reviewer for paper $j$). Denoting $j$'s ``bundle'' of assigned reviewers by $A_j$, it might be the case that $u_i(A_j)$ is undefined. To avoid this issue, we consider a modified notion of envy ratio that we term the \emph{shuffle envy ratio}.\footnote{Similar results are obtained if we instead use the standard definition of envy, and only consider an agent's envy towards papers that they themselves are not reviewing.} For a pair of agents $i,j$, agent $i$'s shuffle envy ratio towards $j$ is half\footnote{This factor is chosen to align it with the definition of \pef, which also measures half of the ratio between the maximum utility of agent $i$ in any outcome that hurts no agent but agent $j$ and the current utility of agent $i$.} of the ratio between the maximum utility agent $i$ could receive from a feasible allocation $B$ such that $A_k=B_k$ for all $k \neq i,j$, and her utility for her assigned set of reviewers, $u_i(A_i)$. That is, agent $i$ is allowed to improve the outcome (from her perspective) by shuffling her own allocation and agent $j$'s allocation in whatever way she wishes, but cannot touch the allocation of any other agent. The shuffle envy ratio of a given allocation is the maximum shuffle envy ratio over all pairs of agents. Note that \pef must be at least as large as the shuffle envy ratio, since the modified allocations that shuffle envy allows are also allowed by the definition of \pef.

\begin{figure*}
    \centering
    \begin{subfigure}[b]{0.325\textwidth}
        \centering
        \includegraphics[width=\textwidth]{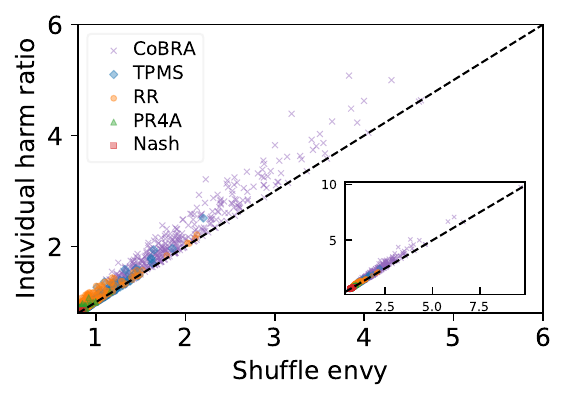}
        \caption{ICLR 2018}
    \end{subfigure}
    \begin{subfigure}[b]{0.325\textwidth}
        \centering
        \includegraphics[width=\textwidth]{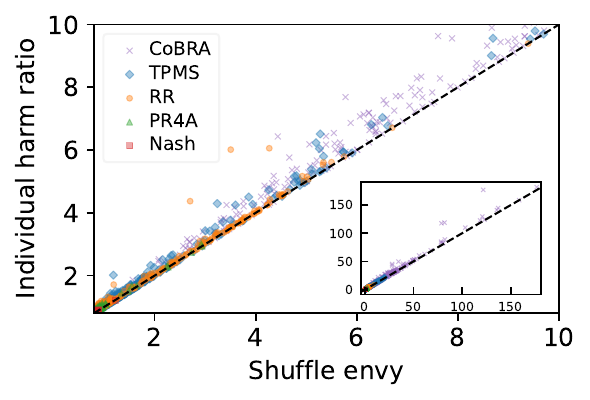}
        \caption{CVPR 2018}
    \end{subfigure}
    \begin{subfigure}[b]{0.325\textwidth}
        \centering
        \includegraphics[width=\textwidth]{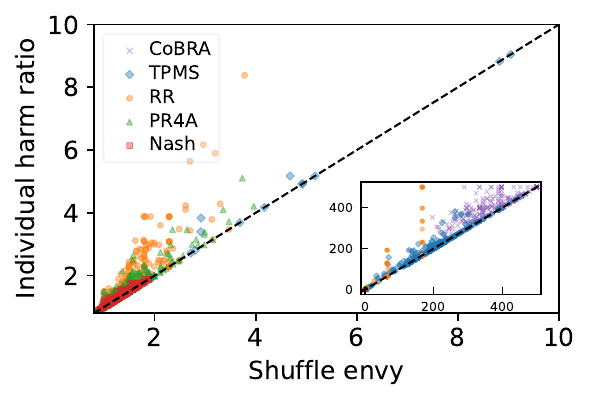}
        \caption{CVPR 2017}
    \end{subfigure}
    \begin{subfigure}[b]{0.3\textwidth}
        \centering
        \includegraphics[width=\textwidth]{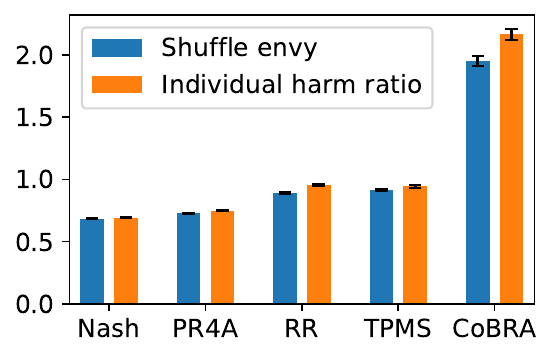}
        \caption{ICLR 2018}
    \end{subfigure}
    \quad
    \begin{subfigure}[b]{0.3\textwidth}
        \centering
        \includegraphics[width=\textwidth]{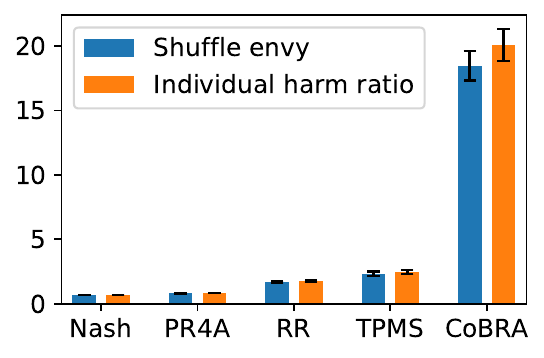}
        \caption{CVPR 2018}
    \end{subfigure}
    \quad
    \begin{subfigure}[b]{0.3\textwidth}
        \centering
        \includegraphics[width=\textwidth]{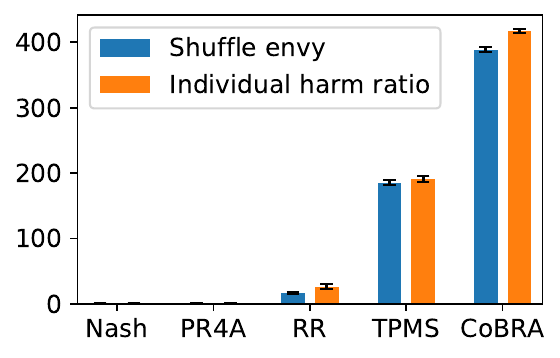}
        \caption{CVPR 2017}
    \end{subfigure}
    \caption{Scatter plots showing instance-wise public and shuffle envy ratios and bar charts displaying average public and shuffle envy ratios. In the scatter plots, the main plot is zoomed into the lower left region of the graph that contains more than 90\% of all data points -- except for CVPR'17 where CoBRA and TPMS lie far from the others -- while the inset plot shows the full set of instances.}
\end{figure*}

\textbf{Observations.} Examining the graphs, we again see that \pef is highly correlated with the shuffle envy ratio. Taken together with the private goods division results in the previous subsection, these results give us confidence that \pef is capturing an intuitive and meaningful aspect of fairness. In terms of the algorithms that we examine, Nash, PR4A, and round robin all perform well. TPMS does well on ICLR 2018 and CVPR 2018 data, but relatively badly on CVPR 2017. 
Finally, CoBRA performs much worse than all other algorithms in terms of both \pef and shuffle envy ratio.  

\subsection{Participatory Budgeting}
Our final setting is participatory budgeting, where the goal is for a city to select a subset of proposed projects to fund subject to a budget constraint, based on residents' preferences. Specifically, there is a set of projects, each with an associated cost. Each voter casts an approval ballot, approving a subset of the projects. The outcome is a set of projects to fund whose total cost is at most a given budget.  

\textbf{Datasets.} We use real participatory budgeting elections from \href{https://pabulib.org/}{Pabulib.org}~\cite{SST20}. We focus on five Polish cities: Warsaw, Gdynia, Lodz, Wroclaw, and Zabrze. For tractability, we use the 321 elections with less than 15 projects on the ballot. We consider two common models of voter utility: approval utilities, in which the utility that a voter has for an outcome is the number of her approved projects that are funded, and cost utilities, in which her utility is the total cost of her approved projects that are funded. We define a voter's utility for having none of her approved projects funded to be $\epsilon$ instead of $0$.\footnote{We set $\epsilon=10^{-2}$ for approval utilities, and $\epsilon=10^{-3} B$ for cost utilities, where $B$ is the total budget.} This avoids infinite harm ratios and arguably better reflects reality, where we would expect voters to derive at least a small amount of utility even for projects that they did not vote for.

\textbf{Measurements.} When the number of agents is large, as in citywide elections, the \peffull tends to be very small because it is hard for to find an alternative outcome that makes one agent better off, hurts at most one other agent, and keeps all the remaining (many) agents at least as happy. For this reason, we instead evaluate the \pgffull in this section, which is always at least as large as the \peffull. However, note that it is easy for a small group of agents to be harmed by the presence of a larger group. For example, a single agent who has no approved projects funded gets utility $\epsilon$, but can increase her utility by a very large multiplicative factor if she is allowed to hurt all other agents and simply choose her most preferred budget-feasible set of projects to fund. 

Recognizing this, we examine the average \pgf as a function of the size of the harmed group $S$. By the above argument, small groups $S$ will often produce large \pgf, but these are violations that an election organizer might be comfortable with in a large election; unfairness towards a large group of agents, on the other hand, remains undesirable. Technically, we measure \pgf as in Definition~\ref{def:pgf}, but only taking maximum over harmed groups $S$ of a fixed size. 

In \Cref{fig:pb}, we plot the average (over elections) of the maximum \pgffull across all $(S,T)$ subject to the condition that $|S|/n \ge x$. For comparison, in \Cref{fig:pb-percentiles}, we consider the full distribution of agent utilities, normalized by their maximum possible utility.\footnote{The ratio of an agent's utility to her maximum possible utility is the approximation of proportionality for her.} For every instance, we order the agents by this normalized utility, and take the average over instances (we give voter percentile on the $x$ axis to control for instances having different numbers of agents).

\begin{figure*}[t]
    \centering
    \begin{subfigure}{0.32\textwidth}
        \centering
       \includegraphics[width=\textwidth]{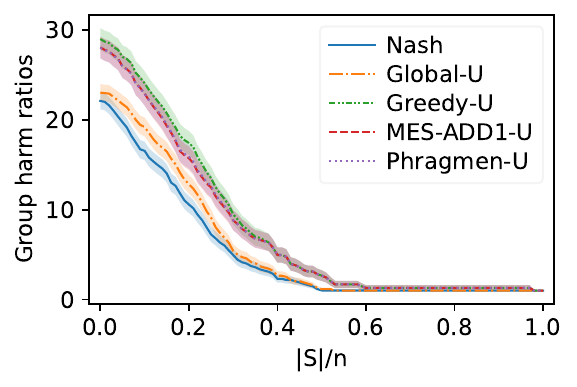}
        \caption{Approval Utilities}
    \end{subfigure}
    \qquad\qquad\qquad
    \begin{subfigure}{0.32\textwidth}
        \centering
        \includegraphics[width=\textwidth]{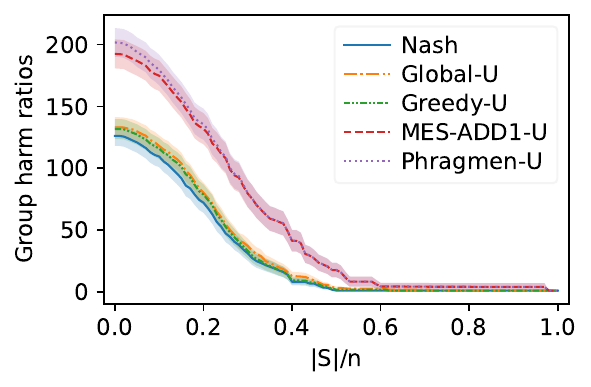}
        \caption{Cost Utilities}
    \end{subfigure}
    \caption{Plots show normalized public group envy ratio as a function of the minimum size of the envious group of agents $S$.}
    \label{fig:pb}
\end{figure*}
\begin{figure*}[t]
    \centering
    \begin{subfigure}{0.32\textwidth}
        \centering
        \includegraphics[width=\textwidth]{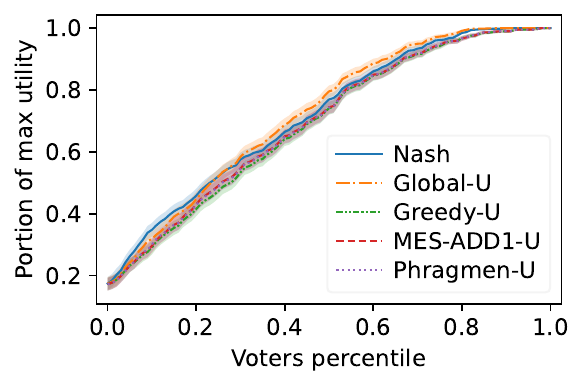}
        \caption{Approval Utilities}
    \end{subfigure}
    \qquad\qquad\qquad
    \begin{subfigure}{0.32\textwidth}
        \centering
        \includegraphics[width=\textwidth]{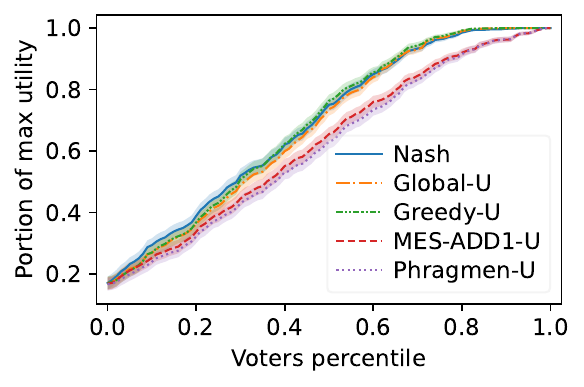}
        \caption{Cost Utilities}
    \end{subfigure}
    \caption{Plots show percentiles of voters proportional utilities (fraction of one's maximum feasible utility) averaged over all PB instances.}
    \label{fig:pb-percentiles}
\end{figure*}

\textbf{Rules.} We consider five commonly studied rules. Global-U is the utilitarian welfare-maximizing solution, Greedy-U is its greedy version that is more commonly used in practice~\cite{AS21}, and Nash denotes the MNW rule.\footnote{Greedy-U selects projects one by one in decreasing order of their utilitarian welfare, skipping over any project that cannot be added due to the budget constraint. In practice, instead of MNW, its two close approximations, namely proportional approval voting (PAV) and maximum smoothed Nash welfare, are more often used due to their better axiomatic properties. In \Cref{app:more-experiments}, we show that these rules have performance similar to that of MNW.} We also consider the method of equal shares (MES)~\cite{PPS21} and Phragmen's sequential rule (see, e.g., \cite{brill2023phragmen}). Importantly, since MES and Phragmen might return an outcome that does not exhaust the allowed budget, we complete the outcomes of these rules into maximal solutions using the greedy utilitarian method (which adds budget-feasible unselected projects one by one, in the decreasing order of their utilitarian welfare). MES is considered a very attractive rule for participatory budgeting and is known to satisfy compelling proportionality properties, while the utilitarian rules are typically considered to be less fair than other rules due to their focus on satisfying a majority of voters, potentially at the expense of a minority. 

\textbf{Observations.} In Figure~\ref{fig:pb}, we see that the \pgffull is high when small harmed groups $S$ are included, but decreases steadily as only larger harmed groups are considered. Under all rules and both utility models, only a very small \pgffull remains for harmed groups of size larger than $0.5n$. Perhaps most interesting is the relative performance of the algorithms. For both utility models, Nash yields the smallest harm ratios, closely followed by Global-U, the utilitarian solution. Somewhat surprisingly, MES and Phragmen, the two rules that are considered more fair in the literature, give outcomes with significantly higher \pgf. The reason is clear when considering \Cref{fig:pb-percentiles}, which shows a clear dominance of the utility vectors achieved by Nash and Global-U over those achieved by MES and Phragmen, and it is this dominance that is being captured by the \pgffull values in \Cref{fig:pb}. 

\section{Discussion}\label{sec:discussion}
Our proposal of novel fairness criteria opens the doors for a variety of extensions that have been popular in the fair division literature lately. For example, in integral settings where the utility set is discrete, can we guarantee ``up to one'' relaxations of our fairness criteria? Can the results be extended to weighted agents with unequal entitlements? Theorem~\ref{thm:mnw-pf} goes through with the usual weighted extensions of MNW ($\argmax_{o} \prod_{i \in \agents} u_i(o)^{w_i}$) and PF ($\max_{o'} \sum_{i \in \agents} w_i \cdot \frac{u_i(o')}{u_i(o)} \le 1$), along with a natural weighted extension of \pgffull, where the factor $\frac{|S|}{|S \cup T|}$ is replaced by $\frac{w(S)}{w(S \cup T)}$ (with $w(C) \triangleq \sum_{i \in C} w_i$). However, extending Theorem~\ref{thm:nw-apx} or considering the discrete case can be interesting. Finally, what can we say about the public bads model, where each agent has a non-positive utility for every outcome? 

\bibliographystyle{abbrvnat}
\bibliography{abb,ultimate,ultimate2,nisarg}

\newpage
\appendix
\section*{\Large \centering Appendix}

\section{Detailed Significance and Implications}\label{app:significance}
\paragraph{Simpler proof of envy-free and Pareto optimal cake-cutting.} We prove that maximum Nash welfare (MNW) achieves \pgffull (\pgf) under compactness and upper convexity of the feasible utility set $\U$ (\Cref{thm:mnw-pf}). As we elaborate in \Cref{sec:mnw}, for the prototypical fair division model of cake-cutting (formally defined in \Cref{sec:prelim-private}), these conditions hold~\cite{DS61} and \pgf implies envy-freeness (EF) and Pareto optimality (PO). Thus, our result provides an alternative proof of the celebrated result that MNW achieves EF+PO in cake-cutting~\cite{Wel85,SS19}. 

Beyond using the result of \citet{DS61} that the feasible utility set $\U$ is closed and convex for cake-cutting, our proof is fully elementary, e.g., teachable in a graduate course on the subject. Its domain-agnostic nature alleviates the need for defining domain-specific terminology (e.g., Borel sets and $\sigma$-algebra for cake-cutting), once closedness and convexity of $\U$ (a subset of $\R^n$) has been established. Finally, it is much more straightforward than the proof of existence of an EF+PO allocation by \citet{Wel85}, which uses Kakutani's fixed point theorem, and the proof of MNW satisfying EF+PO by \citet{SS19}, who first establish an equivalence between MNW and a market equilibrium concept called s-CEEI. 

\paragraph{MNW allocations are $2$-EF and PO when swaps are allowed.} \citet{hylland1979efficient} show existence of EF and PO outcomes in the one-sided matching setting, where $n$ agents need to be (fractionally) matched to $n$ items. This is equivalent to allocating $n$ private goods to $n$ agents but with the additional constraints that each agent receives a total fraction of $1$ from all the goods. Very recently, \citet{troebst2024cardinal} show that finding a fractional matching that is EF and PO is PPAD-complete, and that the MNW solution -- which can be approximated in polynomial time -- achieves $2$-EF. We recover this result as a direct corollary of \Cref{thm:mnw-pf} (MNW $\implies 1$-\pef) and that $1$-\pef implies $2$-EF in the one-sided matching setting. To observe the latter implication, suppose by contradiction that a $1$-\pef matching is not $2$-EF, and the pair of agents $i,j$ witness a violation of $2$-EF. By swapping the bundles of $i$ and $j$, we obtain another matching that the envious agent more than doubles their utilities without hurting anyone but the envied agent, thus failing $1$-\pef. The key observation here is that while we are not allowed to take the union of the allocations to agents $i$ and $j$, and give it entirely to agent $i$ (as we did in cake-cutting), we are allowed to swap the allocations to agents $i$ and $j$, and this is sufficient for $1$-\pef to imply $2$-EF. 

\paragraph{Novel fair division implications.} \Cref{thm:mnw-pf} is connected to a series of works that establish the existence of EF+PO allocations in allocation of homogeneous divisible goods with strongly monotone preferences and a different additional condition: convex preferences~\cite{Var74}, a unique allocation inducing any given (weakly) PO utility vector~\cite{Var74}, the set of allocations inducing any given PO utility vector being convex~\cite{svensson1983existence}, or the last set being a contractible space~\cite{diamantaras1992equity}. Our result provides the existence of a \pef+PO (in fact, a proportionally fair) allocation under a different condition of compactness and upper convexity of utility set $\U$. 

\citet{ray2022fairness} show that a proportionally fair (PF) outcome exists under three conditions: $\O$ is non-empty, compact and convex; each $u_i$ is continuous; and the set of outcomes maximizing any non-negative weighted welfare, $\argmax_{o \in \O} \sum_{i \in N} w_i u_i(o)$, is convex. This is one of the most general PF existence result known. \Cref{thm:mnw-pf} is not subsumed by their result, and points to the possibility of an even more general result for the existence of a PF outcome. 

\citet{Dall01} studies a model that generalizes even cake-cutting, in which each point on the cake (represented as $[0,1]$) can be ``divided'' between the agents, each agent receiving a fraction of the point. Formally, an allocation is given by $\mathbf{\phi} = (\phi_i)_{i \in N}$ with $\phi_i : [0,1] \to [0,1]$ for each $i \in N$ and $\sum_{i \in N} \phi_i(x)=1$ for all $x \in [0,1]$; here, $\phi_i(x)$ denotes the fraction of point $x$ allocated to agent $i$. The utility to agent $i$ is given by $\int_x \phi_i(x) f_i(x) \dd x$, where $f_i$ is a density function of agent $i$. \citet{Dall01} proves that the utility set in this model is compact and convex, allowing \Cref{thm:mnw-pf} to go through to establish the existence of an EF+PO allocation. 

\citet{caragiannis2022beyond} study a model of allocating homogeneous divisible goods, but where the utility function of an agent is weakly monotone over the fraction of a good allocated to her and only additive across the goods. They consider randomized allocations by evaluating the expected utility of the agents. Instead of EF+PO, they seek an allocation satisfying the weaker guarantee of EF and Pareto optimality within the set of EF allocations --- a combination which is trivially guaranteed to be satisfiable. When the agent utilities are concave over the fraction of a good allocated, compactness and convexity of the utility set $\U$ follows and \Cref{thm:mnw-pf} implies the existence of a \emph{deterministic} \pef+PO allocation. \pef does not necessarily imply EF beyond fully additive utilities, so this is another relaxation of EF+PO, but one that is apriori not trivially satisfiable. 

\citet{cole2021existence} study a resource allocation model where an allocation is a lottery over a given finite set of deterministic allocations. Agents have arbitrary utilities for deterministic allocations and calculate expected utilities for a randomized allocation. They show the existence of an EF+PO allocation under a swappability condition on the set of deterministic allocations. \Cref{thm:mnw-pf} implies the existence of a \pef+PO allocation (which is an incomparable guarantee to EF+PO) in their model without any swappability condition. 

\paragraph{Envy in public outcomes.} Finally, our work significantly expands the reach of envy-freeness (or more specifically, the idea of a pairwise notion of individual fairness) to a much larger set of collective decision making domains. In \Cref{sec:computation}, we point out an open question of computing a \pef outcome in voting (where $n$ voters express cardinal preferences over $m$ candidates and a lottery over the candidates is selected) in polynomial time. One can also now explore \pef in domains where the concept of envy has not been explored (but other fairness notions have been), such as clustering~\cite{CFLM19,MS20}, classification~\cite{HMS20}, and federated learning~\cite{ray2022fairness}. 

\section{Additional Experiments}\label{app:more-experiments}

\newcommand{\harmonic}{\operatorname{H}}
For the experiments in the approval-based participatory budgeting setting, in \Cref{fig:variants_nash}, we compare the Nash welfare-maximizing solution to its two close approximations:
\begin{itemize}
    \item Proportional Approval Voting Rule (PAV), which selects the set of projects $O$ that maximizes $\sum_{i} \harmonic(u_i(O))$ where $\harmonic(0) = 0$ and $\harmonic(k) = \sum_{k' = 1}^k \frac{1}{k'}$ for $k \in \N$ is the $k$th harmonic number.
    \item Maximum Smoothed Nash Welfare Rule (Smooth-Nash), which selects the set of projects  that maximizes $\sum_i \log(u_i(O) + 1)$. 
\end{itemize}

\begin{figure}[t]
    \centering
    \begin{subfigure}{0.45\textwidth}
        \centering
        \includegraphics[width=\textwidth]{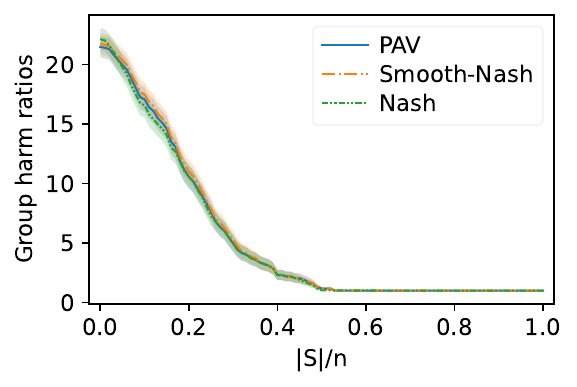}
        \caption{Normalized public group envy ratio as a function of the minimum size of the envious group of agents $S$.}
    \end{subfigure}
    \qquad
    \begin{subfigure}{0.45\textwidth}
        \centering
        \includegraphics[width=\textwidth]{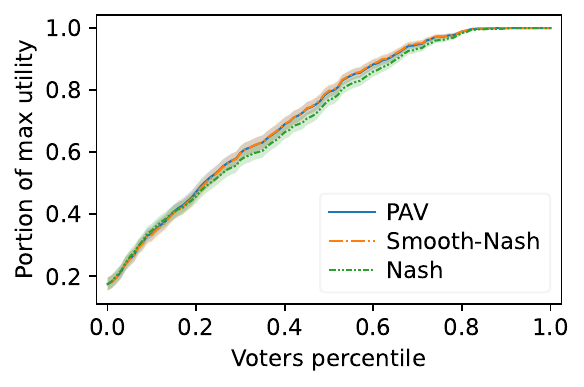}
        \caption{percentiles of voters proportional utilities (fraction of ones maximum feasible utility) averaged over all PB instances.}
    \end{subfigure}
    \caption{Comparison among the Nash welfare-maximizing solution (Nash), Proportional Approval Voting (PAV), and the maximum smoothed Nash nash solution (Smooth-Nash), based on participatory budgeting datasets with approval utilities.}
    \label{fig:variants_nash}
\end{figure}

All three rules achieve similar results. 
\end{document}